\documentclass{aCON2e}

\usepackage{float}
\usepackage{ifpdf}
\usepackage{graphicx,amsmath,amssymb,mathrsfs,lineno,cite}
\usepackage{epstopdf}
\usepackage{mathtools}
\usepackage{multicol,multirow}
\usepackage{amsfonts}

\usepackage{caption}
\captionsetup{labelsep=space}

\usepackage[longnamesfirst,sort]{natbib}
\bibpunct[, ]{(}{)}{;}{a}{,}{,}

\newtheorem{mytheorem}{Theorem}
\newtheorem{myproblem}{Problem}

\newtheorem{mydefinition}{Definition}
\newtheorem{mylemma}{Lemma}
\newtheorem{myremark}{Remark}
\newtheorem{myassumption}{Assumption}
\newtheorem{myalgorithm}{Algorithm}

\makeatletter

\newcommand{\Rmnum}[1]{\expandafter\@slowromancap\romannumeral #1@}
\makeatother

\newcommand{\tabincell}[2]{\begin{tabular}{@{}#1@{}}#2\end{tabular}}

\begin{document}

\title{Mixed LQG and $H_\infty$ Coherent Feedback Control for Linear Quantum Systems}

\author{
\name{Lei Cui, Zhiyuan Dong, Guofeng Zhang$^{\ast}$\thanks{$^\ast$Corresponding author. Email: guofeng.zhang@polyu.edu.hk} and Heung Wing Joseph Lee}
\affil{Department of Applied Mathematics, The Hong Kong Polytechnic University, Hung Hom, Kowloon, Hong Kong, China}
}

\maketitle

\begin{abstract}
The purpose of this paper is to study the mixed linear quadratic Gaussian (LQG) and $H_\infty$ optimal control problem for linear quantum stochastic systems, where the controller itself is also a quantum system, often referred to as ``coherent feedback controller''. A lower bound of the LQG control is proved. Then two different methods, rank constrained linear matrix inequality (LMI) method and genetic algorithm are proposed for controller design. A passive system (cavity) and a non-passive one (degenerate parametric amplifier, DPA) demonstrate the effectiveness of these two proposed algorithms.
\end{abstract}

\begin{keywords}
Coherent feedback control; LQG control; $H_\infty$ control; genetic algorithm; rank constrained LMI method
\end{keywords}

\section{Introduction}

With the rapid development of quantum technology in recent years, more and more researchers are paying attention to quantum control systems, which are an important part in quantum information science. On the other hand, it is found that many methodologies in classical (namely non-quantum) control theory, can be extended into the quantum regime \citep{BHJ2007,DJ1999,DHJ2000,HM2013,JNP2008,WJ2015,ZLH2012}. Meanwhile, quantum control has its special features absent in the classical case, see e.g. \citet{WM2010}, \citet{WNZ2013}, \citet{ZJ2011} and \citet{ZJ2012}. For example, a controller in a quantum feedback control system may be classical, quantum or even mixed quantum/classical \citep{JNP2008}. Generally speaking, in the physics literature, the feedback control problem in which the designed controller is also a fully quantum system is often named as ``coherent feedback control''.

Optimal control, as a vital concept in classical control theory \citep{ZDG1996}, has been widely studied. $H_2$ and $H_{\infty}$ control are the two foremost control methods in classical control theory, which aim to minimize cost functions with specific forms from exogenous inputs (disturbances or noises) to pertinent performance outputs. When the disturbances and measurement noises are Gaussian stochastic processes with known power spectral densities, and the objective is a quadratic performance criterion, then the problem of minimizing this quadratic cost function of linear systems is named as LQG control problem, which has been proved to be equivalent to an $H_2$ optimal control problem \citep{ZDG1996}. On the other hand, $H_{\infty}$ control problem mainly concerns the robustness of a system to parameter uncertainty or external disturbance signals, and a controller to be designed should make the closed-loop system stable, meanwhile minimizing the influence of disturbances or system uncertainties on the system performance in terms of the $H_{\infty}$ norm of a certain transfer function. Furthermore, the mixed LQG (or $H_2$) and $H_{\infty}$ control problem for classical systems has been studied widely during the last three decades. When the control system is subject to both white noises and signals with bounded power, not only optimal performance (measured in $H_2$ norm) but also robustness specifications (in terms of an $H_{\infty}$ constraint) should be taken into account, which is one of the main motivations for considering the mixed control problem \citep{ZGB1994}; see also  \citet{CZ2003}, \citet{DZG1994}, \citet{KR1991}, \citet{NA2004}, \citet{QSY2015}, \citet{ZDG1996} and \citet{ZGB1994} and the references therein.

Very recently, researchers have turned to consider the optimal control problem of quantum systems. For instance, $H_{\infty}$ control of linear quantum stochastic systems is investigated in \citet{JNP2008}, three different types of controllers are designed. \citet{NJP2009} proposes a method for quantum LQG control, for which the designed controller is also a fully quantum system. In \citet{ZJ2011}, direct coupling and indirect coupling for quantum linear systems have been discussed.  It is shown in \citet{ZLH2012} that  phase shifters and ideal squeezers can be used in feedback loop for better control performance. Nevertheless, all of above papers mainly focus on the vacuum inputs, while the authors in \citet{HM2013} concern not only the vacuum case, but also the thermal input. They also discussed how to design both classical and non-classical controllers for LQG control problem. Besides, because of non-linear and non-convex constraints in the coherent quantum controller synthesis, \citet{HP2015} uses a differential evolution algorithm to construct an optimal linear coherent quantum controller. Notwithstanding the above research, to our best knowledge, there is little research on the mixed LQG and $H_{\infty}$ coherent control problem for linear quantum systems, except \citet{BZL2015}.

Similar to the classical case,  in  mixed LQG and $H_{\infty}$ quantum coherent control, LQG and $H_{\infty}$ performances are not independent. Moreover, because the controller to  be designed is another quantum-mechanical system, it has to satisfy additional constraints, which are called ``physical realizability conditions'' \citep{JNP2008,ZJ2012}. For more details, please refer to Section \ref{problem_formulation}.

The contribution of the paper is three-fold. Firstly, a {\it mixed} LQG and $H_{\infty}$ coherent feedback control problem  has been studied, while most of the present literatures (except the conference paper \citet{BZL2015}, by one of the authors) only focus on LQG or $H_{\infty}$ control problem separately. For a typical quantum optical system, there exist quantum white noise as well as finite energy signals (like lasers), while quantum white noise can be dealt with LQG control, finite energy disturbance can better handled by $H_\infty$ control. As a result, it is important to study the mixed control problem. Secondly, we extend Theorem 4.1 in \citet{ZLH2012}, and prove a general result for the lower bound of LQG index. Finally, we propose a genetic-algorithm-based method to design a coherent controller for this mixed problem. In contrast to the numerical algorithm proposed in the earlier conference paper \citep{BZL2015} by one of the authors, the new algorithm is much simpler and is able to produce better results, as clearly demonstrated by numerical studies.

The organization of the paper is as follows. In Section \ref{quantum system}, quantum linear systems are briefly discussed. Section \ref{problem_formulation} formulates the mixed LQG and $H_{\infty}$ coherent feedback control problem. Two algorithms, rank constrained LMI method and genetic algorithm, are proposed in Section \ref{method}. Section \ref{examples} presents numerical studies to demonstrate the proposed algorithms.  Section \ref{conclusion} concludes the paper.

\textbf{Notation}. Let $i=\sqrt{-1}$ be the imaginary unit. $F$ denotes a real skew symmetric $2\times 2$ matrix $F=[0\ 1;-1\ 0]$. Then define a real antisymmetric matrix $\Theta$ with components $\Theta_{jk}$ is \emph{canonical}, which means $\Theta=diag(F,F,...,F)$. Given a column vector of operators $x=[x_1\ \cdots\ x_m]^T$ where $m$ is a positive integer, define $x^{\#}=[x_1^*\ \cdots\ x_m^*]^T$, where the asterisk $*$ indicates Hilbert space adjoint or complex conjugation. Furthermore, define the doubled-up column vector to be $\breve{x}=[x^T\ (x^{\#})^T]^T$, and the matrix case can be defined analogously. Given two matrices $U,V\in\mathbb{C}^{r\times k}$, a doubled-up matrix $\Delta(U,V)$ is defined as $\Delta(U,V)\ $$\coloneqq$$\ [U\ V;V^{\#}\ U^{\#}]$. Let $I_N$ be an identity matrix of dimension $N$, and define $J_N=diag(I_N,-I_N)$, where the ``diag'' notation indicates a block diagonal matrix assembled from the given entries. Then for a matrix $X\in\mathbb{C}^{2N\times 2M}$, define $X^{\flat}\ $$\coloneqq$$\ J_M X^{\dag} J_N$. Finally, the symbol $[\text{ },\text{ }]$ is defined for commutator $[A,B] \coloneqq AB-BA$.

\section{Linear quantum systems}\label{quantum system}

\subsection{Open linear quantum systems} \label{subsec:system}

An open linear quantum system $G$ consists of $N$ quantum harmonic oscillators $a=[a_{1}\ \cdots\ a_{N}]^T$ interacting with $N_w$-channel quantum fields. Here $a_j$ is the \emph{annihilation operator} of the $j$th quantum harmonic oscillator and $a_j^*$ is the \emph{creation operator}, they satisfy canonical commutation relations (CCR): $[a_j,a_k^*]=\delta_{jk}$, and $[a_j,a_k]=[a_j^*,a_k^*]=0$ ($j,k=1,...,N$). Such a linear quantum system can be specified by a triple of physical parameters $(S,L,H)$ \citep{HP1984}.

In this triple, $S$ is a unitary scattering matrix of dimension $N_w$. $L$ is a vector of coupling operators defined by
\begin{equation}
L=C_{-}a+C_{+}a^{\#}
\end{equation}
where $C_{-}$ and $C_{+}$ $\in\mathbb{C}^{N_w\times N}$. $H$ is the Hamiltonian describing the self-energy of the system, satisfying
\begin{equation}
H=\frac{1}{2} \breve{a}^{\dag} \Delta(\Omega_{-},\Omega_{+}) \breve{a}
\end{equation}
where $\Omega_{-}$ and $\Omega_{+}$ $\in\mathbb{C}^{N\times N}$ with $\Omega_{-}=\Omega_{-}^{\dag}$ and $\Omega_{+}=\Omega_{+}^T$.

The \emph{annihilation-creation representation} for linear quantum stochastic systems can be written as the following quantum stochastic differential equations (QSDEs)
\begin{equation}\label{a-c form}\begin{split}
d\breve{a}(t)&=\breve{A}\breve{a}(t)dt+\breve{B}d\breve{b}_{in}(t),\quad \breve{a}(0)=\breve{a}_0 \\
d\breve{y}(t)&=\breve{C}\breve{a}(t)dt+\breve{D}d\breve{b}_{in}(t),
\end{split}\end{equation}
where the correspondences between system matrices $(\breve{A},\breve{B},\breve{C},\breve{D})$ and parameters $(S,L,H)$ are as follows
\begin{equation}\begin{split}
\breve{A}&=-\frac{1}{2}\breve{C}^{\flat}\breve{C}-iJ_{N}\Delta(\Omega_{-},\Omega_{+}),\quad \breve{B}=-\breve{C}^{\flat}\Delta(S,0), \\
\breve{C}&=\Delta(C_{-},C_{+}),\quad \breve{D}=\Delta(S,0).
\end{split}\end{equation}

\subsection{Quadrature representation of linear quantum systems}

In addition to annihilation-creation representation, there is an alternative form, \emph{amplitude-phase quadrature representation}, where all the operators are observable (self-adjoint operators) and all corresponding matrices are real, so this form is more convenient for standard matrix analysis software packages and programmes \citep{BZL2015,ZLH2012}.

Firstly, define a unitary matrix
\begin{equation}
\Lambda_n=\frac{1}{\sqrt{2}}\begin{bmatrix}
I & I \\
-iI & iI
\end{bmatrix}_{n\times n}
\end{equation}
and denote $q_j=(a_j+a_j^*)/\sqrt{2}$ as the \emph{real quadrature}, and $p_j=(-i a_j+i a_j^*)/\sqrt{2}$ as the \emph{imaginary} or \emph{phase quadrature}. It is easy to show these two quadratures also satisfy the CCR $[q_j,p_k]=i \delta_{jk}$ and $[q_j,q_k]=[p_j,p_k]=0$ ($j,k=1,...,N$).

By defining a coordinate transform
\begin{equation}
x \coloneqq \Lambda_n \breve{a},\quad
w \coloneqq \Lambda_{n_w} \breve{b}_{in},\quad
y \coloneqq \Lambda_{n_y} \breve{y},
\end{equation}
we could get
\begin{equation}\label{q form}\begin{split}
dx(t)&=Ax(t)dt+Bdw(t),\quad x(0)=x_0 \\
dy(t)&=Cx(t)dt+Ddw(t),
\end{split}\end{equation}
where $n=2N$, $n_w=2 N_w$, $n_y=2 N_y$ are positive even integers, and $x(t)=[q_1(t)\ \cdots\ q_N(t)\ p_1(t)\ \cdots\ p_N(t)]^T$ is the vector of system variables, $w(t)=[w_1(t)\ \cdots\ w_{n_w}(t)]^T$ is the vector of input signals, including control input signals, noises and disturbances, $y(t)=[y_1(t)\ \cdots\ y_{n_y}(t)]^T$ is the vector of outputs. $A$, $B$, $C$ and $D$ are real matrices in $\mathbb{R}^{n\times n}$, $\mathbb{R}^{n\times n_w}$, $\mathbb{R}^{n_y\times n}$ and $\mathbb{R}^{n_y\times n_w}$, respectively. The correspondences between these coefficient matrices of two different representations are
\begin{equation}\label{trans matrix}\begin{split}
A &=\Lambda_n \breve{A} \Lambda_n^{\dag},\quad B=\Lambda_n \breve{B} \Lambda_{n_w}^{\dag}, \\
C &=\Lambda_{n_y} \breve{C} \Lambda_n^{\dag},\quad D=\Lambda_{n_y} \breve{D} \Lambda_{n_w}^{\dag}.
\end{split}\end{equation}

\begin{myremark}
For simplicity in calculation, we usually do a simple linear transformation to obtain $x(t)=[x_1(t)\ \cdots\ x_n(t)]^T=[q_1(t)\ p_1(t)\ \cdots\ q_N(t)\ p_N(t)]^T$, and similarly in $w(t)$, $y(t)$ and corresponding matrices \citep{ZLH2012}. In the rest of this paper, we only focus on the quadrature form after the linear transformation.
\end{myremark}

\begin{myassumption}\label{assum1}
Without loss of generality, we give some assumptions for quantum systems \citep{BZL2015,NJP2009}.
\begin{enumerate}
\item The initial system variable $x(0)=x_0$ is Gaussian.
\item The vector of inputs $w(t)$ could be decomposed as $dw(t)=\beta_w (t)dt+d\tilde{w}(t)$, where $\beta_w (t)$ is a self-adjoint adapted process, $\tilde{w}(t)$ is the noise part of $w(t)$, and satisfies $d\tilde{w}(t) d\tilde{w}^T (t)=F_{\tilde{w}}dt$, where $F_{\tilde{w}}$ is a nonnegative Hermitian matrix. In quantum optics, $\tilde{w}(t)$ is quantum white noise, and $\beta_w (t)$ is the signal, which in many cases is $L_2$ integrable.

\item The components of $\beta_w (t)$ commute with those of $d\tilde{w}(t)$ and also those of $x(t)$ for all $t\geq 0$.
 \end{enumerate}
\end{myassumption}

\subsection{Physical realizability conditions of linear QSDEs}

The QSDEs (\ref{q form}) may not necessarily represent the dynamics of a meaningful physical system, because quantum mechanics demands physical quantum systems to evolve in a unitary manner. This implies the preservation of canonical commutation relations $x(t)x^T (t)-(x(t)x^T (t))^T=i \Theta$ for all $t\geq 0$, and also another constraint related to the output signal. These constraints are formulated as physically realizability of quantum linear systems in \citet{JNP2008}.

A linear noncommutative stochastic system of quadrature form (\ref{q form}) is physically realizable if and only if
\begin{subequations}\label{phy real}\begin{align}
iA\Theta+i\Theta A^T+B T_{\tilde{w}} B^T=0&, \\
B\begin{bmatrix}
I_{n_y \times n_y} \\
0_{(n_w - n_y)\times n_y}
\end{bmatrix}=\Theta C^T diag_{N_y}(F)&, \\
D=\begin{bmatrix}
I_{n_y \times n_y} & 0_{n_y \times (n_w - n_y)}
\end{bmatrix}&,
\end{align}\end{subequations}
where the first equation determines the Hamiltonian and coupling operators, and the others relate to the required form of the output equation.

\subsection{Direct coupling}\label{add com}

There are also some additional components and relations in quantum systems, such as direct coupling, phase shifter, ideal squeezer, etc. Interested readers could refer to e.g. \citet{ZJ2011,ZJ2012,ZLH2012}. Depending on the need of this paper, we just briefly introduce the \emph{direct coupling}.

In quantum mechanics, two independent systems $G_1$ and $G_2$ may interact by exchanging energy directly. This energy exchange can be described by an interaction Hamiltonian $H_{int}$. In this case, it is said that these two systems are directly coupled. When they are expressed in annihilation-creation operator form, such as:
\begin{equation*}\begin{split}
d\breve{a}_1 (t) &=\breve{A}_1 \breve{a}_1 (t)dt+\breve{B}_{12} \breve{a}_2 (t)dt, \\
d\breve{a}_2 (t) &=\breve{A}_2 \breve{a}_2 (t)dt+\breve{B}_{21} \breve{a}_1 (t)dt,
\end{split}\end{equation*}
where the subscript $1$ means that corresponding terms belong to the system $G_1$, and similar for subscript $2$. $B_{12}$ and $B_{21}$ denote the direct coupling between two systems, and satisfy the relations as follows
\begin{equation*}\begin{split}
B_{12} &=-\Delta(K_{-},K_{+})^{\flat}, \\
B_{21} &=-B_{12}^{\flat}=\Delta(K_{-},K_{+}),
\end{split}\end{equation*}
where $K_- $ and $K_+$ are arbitrary constant matrices of appropriate dimensions.

\begin{mydefinition}
For a quantum linear system in the annihilation-creation operator form which is defined by parameters $(C_{-},C_{+},\Omega_{-},\Omega_{+},K_{-},K_{+})$, there will have the following classifications:
\begin{enumerate}
\item If all ``plus'' parameters (i.e. $C_{+}$, $\Omega_{+}$ and $K_{+}$) are equal to 0, the system is called a passive system;
\item Otherwise, it is called a non-passive system.
\end{enumerate}
\end{mydefinition}

Examples for these two different systems are given in Section \ref{examples}.

\section{Synthesis of mixed LQG and $H_{\infty}$ coherent feedback control problem}\label{problem_formulation}

In this section, we firstly formulate the QSDEs for the closed-loop system, in which both plant and controller are quantum systems, as well as the specific physical realizability conditions. Then $H_\infty$ and LQG control problems are discussed.

\subsection{Composite plant-controller system}\label{our system}

\begin{figure}
\begin{center}
{\includegraphics[scale=.9]{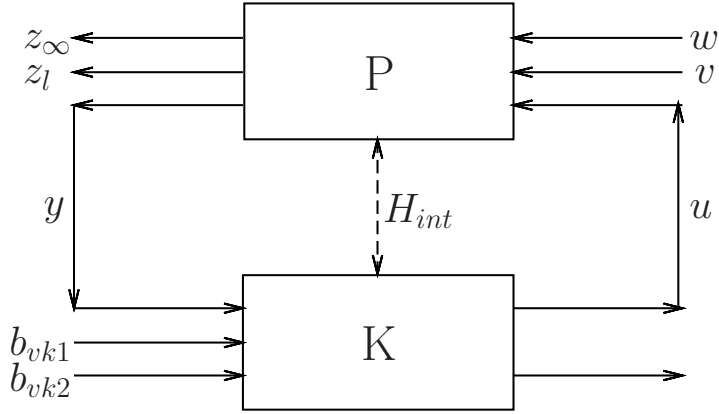}}
\caption{Schematic of the closed-loop plant-controller system.}
\label{sys fig}
\end{center}
\end{figure}

Consider the closed-loop system as shown in Figure \ref{sys fig}. The quantum plant $P$ is described by QSDEs in quadrature form \citep{BZL2015}
\begin{equation}\label{plant}\begin{split}
dx(t) &=Ax(t)dt+B_0 dv(t)+B_1 dw(t)+B_2 du(t), \\
dy(t) &=C_2 x(t)dt+D_{20}dv(t)+D_{21}dw(t), \\
dz_{\infty}(t) &=C_1 x(t)dt+D_{12}du(t), \\
z_l (t) &=C_z x(t)+D_z \beta_u (t),
\end{split}\end{equation}
where $A$, $B_0$, $B_1$, $B_2$, $C_2$, $D_{20}$, $D_{21}$, $C_1$, $D_{12}$, $C_z$ and $D_z$ are real matrices in $\mathbb{R}^{n\times n}$, $\mathbb{R}^{n\times n_v}$, $\mathbb{R}^{n\times n_w}$, $\mathbb{R}^{n\times n_u}$, $\mathbb{R}^{n_y \times n}$, $\mathbb{R}^{n_y \times n_v}$, $\mathbb{R}^{n_y \times n_w}$, $\mathbb{R}^{n_{\infty} \times n}$, $\mathbb{R}^{n_{\infty} \times n_u}$, $\mathbb{R}^{n_l \times n}$, $\mathbb{R}^{n_l \times n_u}$ respectively, and $n$, $n_v$, $n_w$, $n_u$, $n_y$, $n_{\infty}$, $n_l$ are positive integers. $x(t)=[x_1 (t)\ \cdots\ x_n (t)]^T$ is the vector of self-adjoint possibly noncommutative system variables; $u(t)=[u_1 (t)\ \cdots\ u_{n_u}(t)]^T$ is the controlled input; $v(t)=[v_1 (t)\ \cdots\ v_{n_v}(t)]^T$ and $w(t)=[w_1 (t)\ \cdots\ w_{n_w}(t)]^T$ are other inputs. $z_{\infty}(t)=[z_{\infty_1}(t)\ \cdots\ z_{\infty_{n_{\infty}}}(t)]^T$ and $z_l (t)=[z_{l_1}(t)\ \cdots\ z_{l_{n_l}}(t)]^T$ are controlled outputs which are referred to as $H_{\infty}$ and LQG performance, respectively.

The purpose is to design a coherent feedback controller $K$ to minimize the LQG norm and the $H_{\infty}$ norm of closed-loop system simultaneously, and $K$ has the following form
\begin{equation}\label{controller}\begin{split}
d\xi (t) &=A_k \xi (t)dt+B_{k1}db_{vk1}(t)+B_{k2}db_{vk2}(t)+B_{k3}dy(t), \\
du(t) &=C_k \xi (t)dt+db_{vk1}(t),
\end{split}\end{equation}
where $\xi (t)=[\xi_1 (t)\ \cdots\ \xi_{n_k}(t)]^T$ is a vector of self-adjoint variables, and matrices $A_k$, $B_{k1}$, $B_{k2}$, $B_{k3}$, $C_k$ have appropriate dimensions.

\begin{myassumption}\label{assum2}
Similarly with Assumption \ref{assum1}, we give additional assumptions for the specific plant and controller which we consider.
\begin{enumerate}
\item The inputs $w(t)$ and $u(t)$ also have the decompositions: $dw(t)=\beta_w (t)dt+d\tilde{w}(t)$, $du(t)=\beta_u (t)dt+d\tilde{u}(t)$, where the meanings of $\beta_{\ast}$ and $\tilde{\ast}$ are similar as those in Assumption \ref{assum1};
\item The controller state variable $\xi (t)$ has the same order as the plant state variable $x(t)$;
\item $v(t)$, $\tilde{w}(t)$, $b_{vk1}(t)$ and $b_{vk2}(t)$ are independent quantum noises in vacuum state;
\item The initial plant state and controller state satisfy relations: $x(0)x^T (0)-(x(0)x^T (0))^T=i \Theta$, $\xi (0) \xi^T (0)-(\xi (0) \xi^T (0))^T=i \Theta_k$, $x(0)\xi^T (0)-(\xi (0) x^T (0))^T=0$.
\end{enumerate}
\end{myassumption}

By the identification $\beta_u (t) \equiv C_k \xi (t)$ and $\tilde{u}(t) \equiv b_{vk1}(t)$, the closed-loop system is obtained as
\begin{equation}\label{cl-loop system}\begin{split}
d\eta(t) &=M\eta(t)dt+Nd\tilde{w}_{cl}(t)+H\beta_w (t)dt, \\
dz_{\infty}(t) &=\Gamma \eta(t)dt+\Pi d\tilde{w}_{cl}(t), \\
z_l (t) &=\Psi \eta(t),
\end{split}\end{equation}
where $\eta(t)=[x^T (t)\ \xi^T (t)]^T$ denotes the state of the closed-loop system, $\beta_w (t)$ is the disturbance, while $\tilde{w}_{cl}(t)=[v^T (t)\ \tilde{w}^T (t)\ b_{vk1}^T (t)\ b_{vk2}^T (t)]^T$ contains all white noises, and coefficient matrices are shown as follows
\begin{equation*}\begin{split}
M &=\begin{bmatrix}
A & B_2 C_k \\
B_{k3}C_2 & A_k
\end{bmatrix}, \\
N &=\begin{bmatrix}
B_0 & B_1 & B_2 & 0 \\
B_{k3}D_{20} & B_{k3}D_{21} & B_{k1} & B_{k2}
\end{bmatrix}, \\
H &=\begin{bmatrix}
B_1 \\
B_{k3}D_{21}
\end{bmatrix},
\quad
\Gamma =\begin{bmatrix}
C_1 & D_{12}C_k
\end{bmatrix}, \\
\Pi &=\begin{bmatrix}
0 & 0 & D_{12} & 0
\end{bmatrix},
\quad
\Psi =\begin{bmatrix}
C_z & D_z C_k
\end{bmatrix}.
\end{split}
\end{equation*}

\subsection{Physical realizability conditions}

For the plant $P$ introduced in the previous section, we want to design a controller $K$ which is also a quantum system. Hence from \citet{JNP2008} and \citet{ZLH2012}, the equations (\ref{controller}) should also satisfy the following physical realizability conditions
\begin{subequations}\label{our phy thm}\begin{align}
A_k \Theta_k+\Theta_k A_k^T &+B_{k1}diag_{n_{vk1} /2}(F)B_{k1}^T \nonumber \\
&+B_{k2}diag_{n_{vk2} /2}(F)B_{k2}^T \nonumber \\
&+B_{k3}diag_{n_{vk3} /2}(F)B_{k3}^T=0, \\
B_{k1} &=\Theta_k C_k^T diag_{n_u /2}(F).
\end{align}\end{subequations}

\subsection{LQG control problem}\label{LQG analysis}

For the closed-loop system (\ref{cl-loop system}), we associate a quadratic performance index
\begin{equation}\label{qua index}
J(t_f)=\int_0^{t_f} \langle z_l^T (t) z_l (t) \rangle dt,
\end{equation}
where the notation $\langle \cdot \rangle$ is standard and refers to as quantum expectation \citep{Mer1998}.

\begin{myremark}
In classical control, $\int_0^\infty (x(t)^T P x(t) + u(t)^T Q u(t)) dt$ is the standard form for LQG performance index, where $x$ is the system variable and $u$ is the control input.  However, things are more complicated in the quantum regime. By Eq. (\ref{controller}), we can see that $u(t)$ is a function of both $\xi(t)$ (the controller variable) and $b_{vk1}(t)$ (input quantum white noise).  If we use $u(t)$ in Eq. (\ref{controller}) directly, then there will be quantum white noise in the LQG performance index, which yields an unbounded LQG control performance.  On the other hand, by Eq. (\ref{qua index}), the LQG performance index is a function of $x(t)$ (the system variable) and $\xi(t)$ (the controller variable). This is the appropriate counterpart of the classical case.
\end{myremark}

Generally, we always focus on the infinite horizon case $t_f \to \infty$. Therefore, as in \citet{NJP2009}, assume that $M$ is asymptotically stable, by standard analysis methods, we have the infinite-horizon LQG performance index as
\begin{equation} \label{eq:lqg}
J_{\infty}=\lim_{t_f \to \infty} \frac{1}{t_f} \int_0^{t_f} \langle z_l^T (t) z_l (t) \rangle dt=\mathrm{Tr}(\Psi P \Psi^T),
\end{equation}
where $P$ is the unique symmetric positive definite solution of the Lyapunov equation
\begin{equation}\label{lya eq}
M P+P M^T +\frac{1}{2} N N^T=0.
\end{equation}

\begin{myproblem}\label{ori LQG}
The LQG coherent feedback control problem is to find a quantum controller $K$ of equations (\ref{controller}) that minimizes the LQG performance index $J_{\infty}=\mathrm{Tr}(\Psi P \Psi^T)$. Here $P$ is the unique solution of equation (\ref{lya eq}), and coefficient matrices of controller satisfy constraints (\ref{our phy thm}).
\end{myproblem}

When considering minimizing LQG performance index, firstly we want to know the minimum. But for general case, it is too complicated to get the theoretical result, so we choose the orders of plant and controller to be 2.  In this case, because $C_z^T C_z$ and $D_z^T D_z$ are both 2-by-2 positive semi-definite real matrices, we denote
\begin{equation*}
C_z^T C_z=\begin{bmatrix}
c_1 & c_2 \\
c_2 & c_3
\end{bmatrix},
D_z^T D_z=\begin{bmatrix}
d_1 & d_2 \\
d_2 & d_3
\end{bmatrix},
C_k=\begin{bmatrix}
c_{k1} & c_{k2} \\
c_{k3} & c_{k4}
\end{bmatrix},
\end{equation*}
where all parameters in these matrices are real scalars.

In analogy to Theorem 4.1 in \citet{ZLH2012}, we have the following result.

\begin{mytheorem}\label{lower bound LQG}(The lower bound of LQG index)
Assume that both the plant and the controller defined in Section \ref{our system} are in the ground state, then LQG performance index
\begin{equation*}
J_{\infty} \geq \frac{c_1 +c_3}{2}+d_2 (c_{k1} c_{k3}+c_{k2} c_{k4}),
\end{equation*}
where $c_{\ast}$ and $d_{\ast}$ come from the matrices above.

\end{mytheorem}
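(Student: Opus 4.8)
The plan is to turn the statement into a short linear-algebra estimate built on the identity $J_\infty=\mathrm{Tr}(\Psi P\Psi^T)$ from (\ref{eq:lqg}), where $P>0$ solves the Lyapunov equation (\ref{lya eq}) and $\Psi=[\,C_z\ \ D_z C_k\,]$. Partitioning the closed-loop steady-state covariance $P$ conformably with $\eta=[x^T\ \xi^T]^T$ as $P=\begin{bmatrix}P_{11}&P_{12}\\ P_{12}^T&P_{22}\end{bmatrix}$, I would expand
\begin{equation*}
J_\infty=\mathrm{Tr}(C_z^TC_z\,P_{11})+2\,\mathrm{Tr}(D_z^TC_z\,P_{12}\,C_k^T)+\mathrm{Tr}(C_k^TD_z^TD_zC_k\,P_{22}),
\end{equation*}
so that the whole problem reduces to lower-bounding these three terms separately.

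The second step is to record that $P$ is a genuine quantum covariance matrix. Combining the physical realizability conditions (\ref{phy real}) of the plant and (\ref{our phy thm}) of the controller with the fact that all driving fields in (\ref{cl-loop system}) are vacuum, the matrices $P\pm\tfrac{i}{2}\Theta_{cl}$, with $\Theta_{cl}=\mathrm{diag}(\Theta,\Theta_k)$, satisfy Lyapunov equations whose free terms are positive semidefinite; since $M$ is asymptotically stable this gives $P\pm\tfrac{i}{2}\Theta_{cl}\ge 0$, i.e. the diagonal blocks obey the uncertainty relations $P_{11}+\tfrac{i}{2}\Theta\ge 0$ and $P_{22}+\tfrac{i}{2}\Theta_k\ge 0$. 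Now the ground-state hypothesis enters: because both subsystems start in their ground states and the interconnection is physically realizable and vacuum-driven, the vacuum is the steady state of the noise-averaged closed loop, so the self-covariance blocks sit no lower than their vacuum values $P_{11}=\tfrac12 I_2$, $P_{22}=\tfrac12 I_2$ — this is the point where the argument follows the proof of Theorem 4.1 in \citet{ZLH2012}, namely that a physically realizable controller cannot squeeze the zero-point fluctuations of the plant below vacuum. Hence $\mathrm{Tr}(C_z^TC_z\,P_{11})\ge\tfrac{c_1+c_3}{2}$, and, substituting $P_{22}=\tfrac12 I_2$ together with the stated parametrizations of $D_z^TD_z$ and $C_k$,
\begin{equation*}
\mathrm{Tr}(C_k^TD_z^TD_zC_k\,P_{22})=\tfrac12\mathrm{Tr}(D_z^TD_z\,C_kC_k^T)=\tfrac{d_1}{2}(c_{k1}^2+c_{k2}^2)+d_2(c_{k1}c_{k3}+c_{k2}c_{k4})+\tfrac{d_3}{2}(c_{k3}^2+c_{k4}^2).
\end{equation*}

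The third step disposes of the cross term: one shows $\mathrm{Tr}(D_z^TC_z\,P_{12}\,C_k^T)\ge 0$ — either because the LQG weighting separates the state and control penalties, so that $C_z^TD_z=0$ and the term vanishes outright, or, more generally, because in the vacuum steady state $P_{12}$ contracts to zero against the symmetric data. Assembling the three estimates and discarding the manifestly non-negative quantities $\tfrac{d_1}{2}(c_{k1}^2+c_{k2}^2)$ and $\tfrac{d_3}{2}(c_{k3}^2+c_{k4}^2)$ — non-negativity of $d_1$ and $d_3$ being forced by $D_z^TD_z\ge 0$ — yields $J_\infty\ge\tfrac{c_1+c_3}{2}+d_2(c_{k1}c_{k3}+c_{k2}c_{k4})$.

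The main obstacle is the second step. The bound $\mathrm{Tr}(C_z^TC_z\,P_{11})\ge\tfrac{c_1+c_3}{2}$ cannot be obtained from the uncertainty relation $P_{11}+\tfrac{i}{2}\Theta\ge 0$ alone: for a rank-deficient weight $C_z^TC_z$ that relation still permits $\mathrm{Tr}(C_z^TC_z\,P_{11})$ to be arbitrarily small, so one must genuinely use that the plant, once closed with a physically realizable controller that is driven by vacuum and started in the ground state, is pinned to the vacuum covariance $P_{11}=\tfrac12 I_2$ in steady state. Making this precise — and, in particular, ruling out that plant--controller entanglement could be exploited to lower the index — is the delicate part; it requires using the realizability conditions (\ref{phy real}) and (\ref{our phy thm}) jointly inside (\ref{lya eq}) in the second-order case, rather than either one in isolation. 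Everything else is the routine block expansion and the non-negativity bookkeeping above.
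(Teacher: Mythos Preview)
Your route is genuinely different from the paper's, and the difference is where the gap lives.

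The paper never touches the Lyapunov equation in this proof. It simply writes $z_l=C_zx+D_zC_k\xi$, expands $\langle z_l^Tz_l\rangle$ into four terms, rewrites $x$ and $\xi$ in terms of $a,a^*,a_k,a_k^*$, and then \emph{evaluates} each expectation directly in the ground state using $\langle a^*a\rangle=\langle a^2\rangle=\langle (a^*)^2\rangle=0$ (likewise for $a_k$) together with $[a,a^*]=1$. The two cross terms vanish because $x$ and $\xi$ commute and both factors have zero ground-state mean. What remains is $\tfrac{c_1+c_3}{2}+\tfrac{e_1+e_3}{2}$, from which the claimed bound follows by dropping the nonnegative squares $\tfrac{d_1}{2}(c_{k1}^2+c_{k2}^2)$ and $\tfrac{d_3}{2}(c_{k3}^2+c_{k4}^2)$. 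There is no covariance block decomposition, no uncertainty-relation argument, no appeal to the steady-state dynamics.

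Your proposal instead pushes everything through the steady-state covariance $P$ from (\ref{lya eq}) and then tries to \emph{deduce} $P_{11}=P_{22}=\tfrac12 I_2$ and $P_{12}=0$ from ``start in ground state $+$ vacuum inputs $+$ physical realizability''. That deduction is false in general: a physically realizable linear quantum system driven by vacuum need not have the vacuum as its stationary state --- the DPA in the paper's own Example~2 is a counterexample, since its stationary covariance is squeezed, not $\tfrac12 I_2$. So the sentence ``the vacuum is the steady state of the noise-averaged closed loop'' is exactly the step that fails, and you correctly sense this when you call it ``the delicate part''. The same issue infects your treatment of the cross term: neither alternative you offer (``$C_z^TD_z=0$'' or ``$P_{12}$ contracts to zero'') is supplied by the hypotheses.

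The fix is to read the ground-state hypothesis the way the paper does: it is an assumption about the state in which $\langle z_l^Tz_l\rangle$ is evaluated, not a conclusion to be extracted from the Lyapunov dynamics. Under that reading your block formula with $P=\tfrac12 I_4$ collapses to exactly the paper's operator computation, only with more scaffolding.
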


\begin{proof}
Since $z_l =C_z x+D_z \beta_u =C_z x +D_z C_k \xi$, we could easily get
\begin{equation}\label{zl eq}\begin{split}
\langle z_l^T z_l \rangle &=\langle (C_z x+D_z C_k \xi)^T (C_z x+D_z C_k \xi) \rangle \\
&=\langle x^T C_z^T C_z x \rangle + \langle \xi^T C_k^T D_z^T D_z C_k \xi \rangle \\
&\phantom{=}+\langle x^T C_z^T D_z C_k \xi \rangle + \langle \xi^T C_k^T D_z^T C_z x \rangle,
\end{split}\end{equation}
where
\begin{equation*}
x=\begin{bmatrix}
q \\
p
\end{bmatrix}
=\frac{1}{\sqrt{2}}\begin{bmatrix}
1 & 1 \\
-i & i
\end{bmatrix}\begin{bmatrix}
a \\
a^*
\end{bmatrix},
\xi =\begin{bmatrix}
q_k \\
p_k
\end{bmatrix}
=\frac{1}{\sqrt{2}}\begin{bmatrix}
1 & 1 \\
-i & i
\end{bmatrix}\begin{bmatrix}
a_k \\
a_k^*
\end{bmatrix}.
\end{equation*}

Then we have
\begin{equation}\label{first term}\begin{split}
&\langle x^T C_z^T C_z x \rangle =\frac{1}{2} \langle [a\ a^*] \begin{bmatrix}
1 & -i \\
1 & i
\end{bmatrix}\begin{bmatrix}
c_1 & c_2 \\
c_2 & c_3
\end{bmatrix}\begin{bmatrix}
1 & 1 \\
-i & i
\end{bmatrix}\begin{bmatrix}
a \\
a^*
\end{bmatrix} \rangle \\
&=\frac{1}{2} \langle [a\ a^*] \begin{bmatrix}
c_1 -c_3 -2ic_2 & c_1 +c_3 \\
c_1 +c_3 & c_1 -c_3 +2ic_2
\end{bmatrix}\begin{bmatrix}
a \\
a^*
\end{bmatrix} \rangle \\
&=\frac{1}{2} \langle (c_1 +c_3)a^* a+(c_1 +c_3)a a^* \\
&\phantom{=}+(c_1 -c_3 -2ic_2)aa+(c_1 -c_3 +2ic_2)a^* a^*] \rangle \\
&=\langle (c_1 +c_3)a^* a +\frac{c_1 +c_3}{2} \rangle,
\end{split}\end{equation}
where the last equality follows from our assumption that the plant are in the ground state, and $[a,a^*]=1 \Rightarrow a a^* =1+a^* a$. The second term of equation (\ref{zl eq}) becomes
\begin{equation}\begin{split}
&\phantom{=} \langle \xi^T C_k^T D_z^T D_z C_k \xi \rangle \\
&=\langle [q_k\ p_k] \begin{bmatrix}
c_{k1} & c_{k3} \\
c_{k2} & c_{k4}
\end{bmatrix}\begin{bmatrix}
d_1 & d_2 \\
d_2 & d_3
\end{bmatrix}\begin{bmatrix}
c_{k1} & c_{k2} \\
c_{k3} & c_{k4}
\end{bmatrix}\begin{bmatrix}
q_k \\
p_k
\end{bmatrix} \rangle \\
&=\langle [q_k\ p_k] \begin{bmatrix}
e_1 & e_2 \\
e_2 & e_3
\end{bmatrix}\begin{bmatrix}
q_k \\
p_k
\end{bmatrix} \rangle \\
&=\langle e_1 q_k^2 +e_3 p_k^2 +e_2 (q_k p_k +p_k q_k) \rangle,
\end{split}\end{equation}
where $e_1 =d_1 c_{k1}^2 +d_3 c_{k3}^2 +2d_2 c_{k1}c_{k3}$, $e_3 =d_1 c_{k2}^2 +d_3 c_{k4}^2 +2d_2 c_{k2}c_{k4}$, $e_2 =d_1 c_{k1}c_{k2}+d_3 c_{k3}c_{k4}+d_2 (c_{k1}c_{k4}+c_{k2}c_{k3})$.

While $q_k =\frac{a_k +a_k^*}{\sqrt{2}}$ and $p_k =\frac{-i a_k +i a_k^*}{\sqrt{2}}$, we get
\begin{equation*}\begin{split}
q_k^2 &=\frac{1}{2} [a_k^2 +(a_k^*)^2 +2a_k^* a_k +1], \\
p_k^2 &=-\frac{1}{2} [a_k^2 +(a_k^*)^2 -2a_k^* a_k -1], \\
q_k p_k +p_k q_k &= -i [a_k^2 -(a_k^*)^2],
\end{split}\end{equation*}
and
\begin{equation}\label{second term}\begin{split}
&\phantom{=} \langle \xi^T C_k^T D_z^T D_z C_k \xi \rangle \\
&=\langle \frac{e_1}{2} [a_k^2 +(a_k^*)^2 +2a_k^* a_k +1] \\
&\phantom{=} -\frac{e_3}{2} [a_k^2 +(a_k^*)^2 -2a_k^* a_k -1]-e_2 i [a_k^2 -(a_k^*)^2] \rangle.
\end{split}\end{equation}

Since both the plant and the controller are in the ground state, all terms containing $a$, $a^*$, $a_k$ and $a_k^*$ are 0; and the plant state $x$ commutes with the controller state $\xi$, so the third and fourth terms of equation (\ref{zl eq}) are also 0. By substituting (\ref{first term}) and (\ref{second term}) into (\ref{zl eq}), we obtain the result of $\langle z_l^T z_l \rangle$:
\begin{equation}\begin{split}
\langle z_l^T z_l \rangle &=\frac{c_1 +c_3}{2}+\frac{e_1 +e_3}{2} \\
&=\frac{d_1 (c_{k1}^2 +c_{k2}^2) +d_3 (c_{k3}^2 +c_{k4}^2)+2d_2 (c_{k1} c_{k3}+c_{k2} c_{k4})}{2} \\
&\phantom{=}+\frac{c_1 +c_3}{2}.
\end{split}\end{equation}

Consequently, all square terms are not less than 0, so $J_{\infty} \geq \frac{c_1 +c_3}{2}+d_2 (c_{k1} c_{k3}+c_{k2} c_{k4})$. The proof is completed.
\end{proof}

\begin{myremark}
Sometimes for simplicity, we could choose the coefficient matrix $D_z$ satisfying $d_2=0$, then the bound of LQG index becomes $J_{\infty} \geq \frac{c_1 +c_3}{2}$, which is a constant, independent with the designed controller. This is consistent with the result in \citet{ZLH2012}.
\end{myremark}

Meanwhile, it is easy to see that physical realizability conditions (\ref{our phy thm}) of the coherent controller $K$ are polynomial equality constraints, so they are difficult to solve numerically using general existing optimization algorithms. Hence sometimes we reformulate Problem \ref{ori LQG} into a rank constrained LMI feasibility problem, by letting the LQG performance index $J_{\infty}<\gamma_l$ for a prespecified constant $\gamma_l >0$. This is given by the following result.

\begin{mylemma}(Relaxed LQG problem \citep{NJP2009})
Given $\Theta_k$ and $\gamma_l >0$, if there exist symmetric matrix $P_L=P^{-1}$, $Q$ and coefficient matrices of controller such that physical realizability constraints (\ref{our phy thm}) and following inequality constraints
\begin{equation}\label{nLMI LQG}\begin{split}
\begin{bmatrix}
M^T P_L+P_L M & P_L N \\
N^T P_L & -I
\end{bmatrix} &<0, \\
\begin{bmatrix}
P_L & \Psi^T \\
\Psi & Q
\end{bmatrix} &>0, \\
\mathrm{Tr}(Q) &<\gamma_l
\end{split}\end{equation}
hold, then the LQG coherent feedback control problem admits a coherent feedback controller $K$ of the form (\ref{controller}).
\end{mylemma}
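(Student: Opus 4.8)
The plan is to show that any triple consisting of $P_L$, $Q$ and a set of controller matrices that is feasible for (\ref{nLMI LQG}) together with (\ref{our phy thm}) produces a closed-loop system which is internally stable and satisfies $J_\infty < \gamma_l$; since (\ref{our phy thm}) is exactly the physical realizability requirement on the controller, the resulting $K$ is automatically an admissible coherent controller of the form (\ref{controller}), so the only real content is the performance bound. I would organize the argument into four steps: (i) extract $P_L>0$ and Hurwitz stability of $M$ from the LMIs; (ii) convert the first LMI into a Lyapunov/Riccati inequality and compare its solution with the genuine LQG Gramian $P$ of (\ref{lya eq}); (iii) use the second LMI to dominate $\mathrm{Tr}(\Psi P\Psi^T)$ by $\mathrm{Tr}(Q)$; (iv) close the chain with the third inequality $\mathrm{Tr}(Q)<\gamma_l$.

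For step (i): the $(1,1)$ block of the positive definite matrix in the second LMI is $P_L$, so $P_L>0$ and $\bar P \coloneqq P_L^{-1}$ is well defined and positive definite; and the leading block of the negative definite matrix in the first LMI gives $M^T P_L + P_L M < 0$, whence $M$ is Hurwitz by Lyapunov's theorem. This legitimizes the infinite-horizon setup of Problem \ref{ori LQG}: the Lyapunov equation (\ref{lya eq}) has a unique solution $P=P^T>0$ and $J_\infty = \mathrm{Tr}(\Psi P\Psi^T)$ is well posed. For step (ii): taking the Schur complement of the first LMI with respect to its $-I$ block gives $M^T P_L + P_L M + P_L N N^T P_L < 0$; a congruence by $\bar P$ on both sides turns this into $M\bar P + \bar P M^T + N N^T < 0$, which in particular implies $M\bar P + \bar P M^T + \tfrac12 N N^T < 0$. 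Subtracting (\ref{lya eq}) yields $M(\bar P - P) + (\bar P - P)M^T < 0$, and since $M$ is Hurwitz the standard monotonicity of the Lyapunov map forces $\bar P \ge P$, hence $\mathrm{Tr}(\Psi P\Psi^T) \le \mathrm{Tr}(\Psi \bar P\Psi^T)$.

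For step (iii), the Schur complement of the second LMI with respect to the block $P_L>0$ gives $Q - \Psi P_L^{-1}\Psi^T > 0$, that is, $Q > \Psi \bar P\Psi^T$, so $\mathrm{Tr}(\Psi\bar P\Psi^T) < \mathrm{Tr}(Q)$. Combining steps (ii) and (iii) with the third inequality in (\ref{nLMI LQG}) yields $J_\infty = \mathrm{Tr}(\Psi P\Psi^T) < \gamma_l$, and since the chosen controller matrices satisfy (\ref{our phy thm}) the controller $K$ is physically realizable; this is precisely the claimed conclusion. The argument essentially reproduces the standard $H_2$/LQG LMI characterization adapted to the coherent setting, following \citet{NJP2009}.

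The step I expect to be the crux is (ii): replacing the true $H_2$ Gramian $P$ by a super-solution $\bar P$ of the Lyapunov inequality. Two points require care there. First, one must be sure that $M$ is actually stable \emph{before} appealing to monotonicity of Lyapunov solutions, which is why step (i) must precede step (ii). Second, one must check that the $-I$ appearing in the first LMI is consistent with the $\tfrac12 N N^T$ normalization of (\ref{lya eq}); it is, precisely because the strict inequality leaves exactly the slack $\tfrac12 N N^T$ needed to dominate the equality, and a different scaling of the $(2,2)$ block could in any case be absorbed by rescaling $N$. Everything else reduces to routine Schur-complement manipulation.
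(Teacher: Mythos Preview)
The paper does not prove this lemma; it is simply quoted from \citet{NJP2009} and stated without argument, so there is nothing in the paper to compare your proof against. Your proof is correct and is exactly the standard $H_2$/LQG LMI argument: Schur complements turn the two block LMIs into the Lyapunov super-inequality $M\bar P+\bar P M^T+NN^T<0$ and the trace bound $Q>\Psi\bar P\Psi^T$, and monotonicity of the Lyapunov map for Hurwitz $M$ then gives $P\le\bar P$ and hence $J_\infty=\mathrm{Tr}(\Psi P\Psi^T)<\gamma_l$.

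One remark worth making explicit: the clause ``$P_L=P^{-1}$'' in the lemma statement is misleading if read literally with $P$ the solution of (\ref{lya eq}). Indeed, from $MP+PM^T+\tfrac12 NN^T=0$ one gets $M^TP^{-1}+P^{-1}M=-\tfrac12 P^{-1}NN^TP^{-1}$, and substituting this into the Schur-complement form of the first LMI yields $\tfrac12 P_LNN^TP_L<0$, which fails whenever $N\ne 0$. So your decision to treat $P_L$ as a free symmetric decision variable, with $\bar P=P_L^{-1}$ a strict Lyapunov super-solution dominating the true Gramian $P$, is the only sensible reading and is precisely how the cited reference formulates it.
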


\subsection{$H_{\infty}$ control problem}\label{Hinf analysis}

For linear systems, the $H_{\infty}$ norm can be expressed as follows
\begin{equation}
\lVert T \rVert_{\infty}=\sup_{\omega \in \mathbb{R}}\sigma_{\max}[T(j\omega)]=\sup_{\omega\in \mathbb{R}}\sqrt{\lambda_{\max}(T^* (j\omega)T(j\omega))}
\end{equation}
where $\sigma_{\max}$ is the maximum singular value of a matrix, and $\lambda_{\max}$ is the maximum eigenvalue of a matrix.

Since we consider the $H_{\infty}$ control problem for the closed-loop system (\ref{cl-loop system}), and only $\beta_w$ part contains exogenous signals while the others are all white noises, we interpret $\beta_w \to z_{\infty}$ as the robustness channel for measuring $H_{\infty}$ performance, and our objective to be minimized is
\begin{equation}\begin{split}
\lVert G_{\beta_w \to z_{\infty}} \rVert_{\infty} &=\lVert D_{cl}+C_{cl}(sI-A_{cl})^{-1}B_{cl} \rVert_{\infty} \\
&=\lVert \Gamma (sI-M)^{-1} H \rVert_{\infty}
\end{split}\end{equation}

\begin{myproblem}\label{ori Hinf}
The $H_{\infty}$ coherent feedback control problem is to find a quantum controller $K$ of form (\ref{controller}) that minimizes the $H_{\infty}$ performance index $\lVert G_{\beta_w \to z_{\infty}} \rVert_{\infty}$, while coefficient matrices of controller $A_k$, $B_{k1}$, $B_{k2}$, $B_{k3}$ and $C_k$ satisfy constraints (\ref{our phy thm}) simultaneously.
\end{myproblem}

Similarly to the LQG case, we proceed to relax Problem \ref{ori Hinf} into a rank constrained LMI feasibility problem, i.e. let $\lVert G_{\beta_w \to z_{\infty}} \rVert_{\infty}<\gamma_{\infty}$ for a prespecified constant $\gamma_{\infty} >0$, then we get the following lemma.

\begin{mylemma}(Relaxed $H_{\infty}$ problem \citep{ZJ2011})
Given $\Theta_k$ and $\gamma_{\infty} >0$, if there exist $A_k$, $B_{k1}$, $B_{k2}$, $B_{k3}$, $C_k$ and a symmetric matrix $P_H$ such that physical realizability constraints (\ref{our phy thm}) and following inequality constraints
\begin{equation}\label{nLMI Hinf}\begin{split}
\begin{bmatrix}
M^T P_H+P_H M & P_H H & \Gamma^T \\
H^T P_H & -\gamma_{\infty} I & 0 \\
\Gamma & 0 & -\gamma_{\infty} I
\end{bmatrix} &<0, \\
P_H &>0
\end{split}\end{equation}
hold, then the $H_{\infty}$ coherent feedback control problem admits a coherent feedback controller $K$ of the form (\ref{controller}).
\end{mylemma}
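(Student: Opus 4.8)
The plan is to reduce the claim to two independent facts: the sufficiency direction of the Bounded Real Lemma (BRL) applied to the closed loop, and the physical realizability characterisation of \citet{JNP2008}. Because the statement is a synthesis existence result of the form ``if matrices satisfying both constraint sets exist, then a controller exists'', the strategy is constructive: I build the controller $K$ directly from the given data $(A_k,B_{k1},B_{k2},B_{k3},C_k)$ via (\ref{controller}), and then verify that the resulting closed loop is (i) a genuine quantum system and (ii) meets the $H_\infty$ specification.

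First I would record that in the channel $\beta_w \to z_\infty$ of the closed-loop system (\ref{cl-loop system}) there is no direct feedthrough from $\beta_w$ (the feedthrough term $\Pi\,d\tilde{w}_{cl}$ carries only white noise), so the transfer function reduces to $G_{\beta_w\to z_\infty}(s)=\Gamma(sI-M)^{-1}H$ with zero $D$-term, where $M$, $H$, $\Gamma$ are assembled from the plant and the given controller matrices exactly as in Section \ref{our system}. The inequality $P_H>0$ in (\ref{nLMI Hinf}) then fixes a candidate Lyapunov certificate for this closed loop.

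Next I would apply a Schur complement to the first inequality in (\ref{nLMI Hinf}) with respect to its negative-definite lower-right block $\mathrm{diag}(-\gamma_\infty I,-\gamma_\infty I)$. This yields the equivalent compressed Riccati-type inequality
\begin{equation*}
M^T P_H + P_H M + \frac{1}{\gamma_\infty}\left(P_H H H^T P_H + \Gamma^T \Gamma\right) < 0 .
\end{equation*}
Since the two correction terms are positive semidefinite, this forces $M^T P_H + P_H M < 0$; combined with $P_H>0$ it is a strict Lyapunov inequality, so $M$ is Hurwitz and the closed loop is internally stable. The same Riccati inequality is precisely the sufficient condition of the strict BRL, whence $\lVert \Gamma(sI-M)^{-1}H\rVert_\infty<\gamma_\infty$, i.e. $\lVert G_{\beta_w\to z_\infty}\rVert_\infty<\gamma_\infty$. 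I would emphasise that the sufficiency direction used here needs no extra stabilisability or detectability hypotheses: strict feasibility of the LMI alone delivers both stability and the norm bound.

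Finally, invoking the physical realizability conditions (\ref{our phy thm}), which by \citet{JNP2008} are exactly the requirement that (\ref{controller}) describe an open quantum harmonic oscillator preserving the CCR, I conclude that the controller so constructed is a bona fide coherent (fully quantum) controller, and hence $K$ solves the relaxed $H_\infty$ coherent feedback problem. The main obstacle is not the analysis but keeping the argument non-circular: the matrices $M$, $H$, $\Gamma$ are themselves functions of the unknown controller parameters, so I must read the LMI (\ref{nLMI Hinf}) as a constraint jointly on $(A_k,\dots,C_k)$ and $P_H$ and apply the BRL only after the closed-loop matrices are assembled. Because the equalities (\ref{our phy thm}) are nonconvex in these same variables, the delicate point is precisely that the lemma asserts existence under simultaneous feasibility rather than offering a convex route to producing such matrices.
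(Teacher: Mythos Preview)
Your argument is correct and follows the standard route: assemble the closed loop, invoke the strict Bounded Real Lemma via a Schur complement on (\ref{nLMI Hinf}) to obtain both Hurwitz stability of $M$ and the bound $\lVert\Gamma(sI-M)^{-1}H\rVert_\infty<\gamma_\infty$, and appeal to (\ref{our phy thm}) together with \citet{JNP2008} to certify that $K$ is a genuine quantum controller. The paper itself does not prove this lemma at all; it simply states it with a citation to \citet{ZJ2011}, so there is nothing in the paper to compare against beyond noting that your sketch is precisely the kind of argument underlying the cited result.
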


Meanwhile, we also want to know the lower bound of $H_{\infty}$ performance index.  It is in general difficult to derive the minimum value of $H_{\infty}$ index analytically, here we just present a simple example. We begin with the following remark.

\begin{myremark}\label{lower bound Hinf}
By referring to \citet{JNP2008}, there exists an $H_{\infty}$ controller of form (\ref{controller}) for the quantum system (\ref{plant}), if and only if the following pair of algebraic Riccati equations
\begin{equation}\label{riccati eq1}\begin{split}
(A &-B_2 E_1^{-1} D_{12}^T C_1)^T X+X(A -B_2 E_1^{-1} D_{12}^T C_1) \\
&+X(B_1 B_1^T - \gamma_{\infty}^2 B_2 E_1^{-1} B_2^T)X \\
&+\gamma_{\infty}^{-2} C_1^T (I-D_{12} E_1^{-1} D_{12}^T) C_1 =0
\end{split}\end{equation}
and
\begin{equation}\label{riccati eq2}\begin{split}
(A &-B_1 D_{21}^T E_2^{-1} C_2)Y+Y(A -B_1 D_{21}^T E_2^{-1} C_2)^T \\
&+Y(\gamma_{\infty}^{-2} C_1^T C_1 -C_2^T E_2^{-1} C_2)Y \\
&+B_1 (I-D_{21}^T E_2^{-1} D_{21}) B_1^T =0
\end{split}\end{equation}
have positive definite solutions $X$ and $Y$, where $D_{12}^T D_{12}=E_1 >0$, $D_{21} D_{21}^T =E_2 >0$.

We consider a simple example. The system equations are described as:
\begin{equation*}\begin{split}
dx(t) &=-\frac{1}{2} \begin{bmatrix}
0.89 & 0 \\
0 & 0.91
\end{bmatrix} x(t)dt -\sqrt{0.5} \begin{bmatrix}
1 & 0 \\
0 & 1
\end{bmatrix} dv(t) \\
&\phantom{=} -\sqrt{0.2} \begin{bmatrix}
1 & 0 \\
0 & 1
\end{bmatrix} dw(t) -\sqrt{0.2} \begin{bmatrix}
1 & 0 \\
0 & 1
\end{bmatrix} du(t), \\
dy(t) &=\sqrt{0.5} \begin{bmatrix}
1 & 0 \\
0 & 1
\end{bmatrix} x(t)dt +\begin{bmatrix}
1 & 0 \\
0 & 1
\end{bmatrix} dv(t) + \delta \begin{bmatrix}
1 & 0 \\
0 & 1
\end{bmatrix} dw(t), \\
dz_{\infty}(t) &=\sqrt{0.2} \begin{bmatrix}
1 & 0 \\
0 & 1
\end{bmatrix} x(t)dt +\begin{bmatrix}
1 & 0 \\
0 & 1
\end{bmatrix} du(t),
\end{split}\end{equation*}
where $\delta$ is a very small positive real number.

There has no problem to calculate the first Riccati equation (\ref{riccati eq1}). For the second one (\ref{riccati eq2}), denote $Y=\begin{bmatrix}
y_1 & y_2 \\
y_2 & y_3
\end{bmatrix}$, we get
\begin{equation}\label{lb H}
\begin{bmatrix}
(\frac{0.2}{\gamma_{\infty}^2}-\frac{0.5}{\delta^2})(y_1^2 +y_2^2)-(0.89-\frac{2\sqrt{0.1}}{\delta})y_1 & [(\frac{0.2}{\gamma_{\infty}^2}-\frac{0.5}{\delta^2})(y_1 + y_3)-(0.9-\frac{2\sqrt{0.1}}{\delta})]y_2 \\
[(\frac{0.2}{\gamma_{\infty}^2}-\frac{0.5}{\delta^2})(y_1 + y_3)-(0.9-\frac{2\sqrt{0.1}}{\delta})]y_2 & (\frac{0.2}{\gamma_{\infty}^2}-\frac{0.5}{\delta^2})(y_2^2 +y_3^2)-(0.91-\frac{2\sqrt{0.1}}{\delta})y_3
\end{bmatrix}=0.
\end{equation}
Notice that, since $\delta$ is very small, $0.89-\frac{2\sqrt{0.1}}{\delta}$, $0.9-\frac{2\sqrt{0.1}}{\delta}$ and $0.91-\frac{2\sqrt{0.1}}{\delta}$ are negative.

From the (1,2) term, we make a classification: $y_2 = 0$ or $y_2 \neq 0$.
\begin{enumerate}
\item $y_2=0$: Since (1,1) and (2,2) terms are 0, we get
\begin{equation*}
y_1=0\quad or\quad y_1=\frac{0.89-\frac{2\sqrt{0.1}}{\delta}}{\frac{0.2}{\gamma_{\infty}^2}-\frac{0.5}{\delta^2}},
\end{equation*}
\begin{equation*}
y_3=0\quad or\quad y_3=\frac{0.91-\frac{2\sqrt{0.1}}{\delta}}{\frac{0.2}{\gamma_{\infty}^2}-\frac{0.5}{\delta^2}}.
\end{equation*}
\item $y_2 \neq 0$: From the (1,2) term we get
\begin{equation*}
y_1 + y_3 =\frac{0.9-\frac{2\sqrt{0.1}}{\delta}}{\frac{0.2}{\gamma_{\infty}^2}-\frac{0.5}{\delta^2}}.
\end{equation*}
After doing the calculation that the (1,1) term minus the (2,2) term, and substituting $y_1 + y_3$ into it, we get
\begin{equation*}
y_1 + y_3 =0.
\end{equation*}
This contradicts the above equation.
\end{enumerate}

Consequently, if the equation (\ref{lb H}) has positive definite solution $Y$, it must satisfy $\frac{0.2}{\gamma_{\infty}^2}-\frac{0.5}{\delta^2}<0$, implying the condition $\gamma_{\infty} >\sqrt{0.4} \delta$.
\end{myremark}

\subsection{Mixed LQG and $H_{\infty}$ control problem}

After above derivations, we find that when we consider $H_{\infty}$ control, we intend to design a controller $K$ to minimize $\lVert \Gamma (sI-M)^{-1} H \rVert_{\infty}$, which depends on matrices $M$, $H$ and $\Gamma$, but these three matrices only depend on controller matrices $A_k$, $B_{k3}$ and $C_k$. Then we use physical realizability constraints to design other matrices $B_{k1}$ and $B_{k2}$ to guarantee the controller is also a quantum system, but these will affect the LQG index, which depends on $M$, $N$, $\Psi$, so further depends on all matrices of the controller. That is, the LQG problem and the $H_{\infty}$ problem are not independent.

According to the above analysis, we state the mixed LQG and $H_{\infty}$ coherent feedback control problem for linear quantum systems.

\begin{myproblem}\label{ori mixed}
The mixed LQG and $H_{\infty}$ coherent feedback control problem is to find a quantum controller $K$ of form (\ref{controller}) that minimizes LQG and $H_{\infty}$ performance indices simultaneously, while its coefficient matrices  satisfy the physical realizability constraints (\ref{our phy thm}).
\end{myproblem}

\begin{mylemma}\label{mixed cond}(Relaxed mixed problem \citep{BZL2015})
Given $\Theta_k$, $\gamma_l >0$ and $\gamma_{\infty} >0$, if there exist $A_k$, $B_{k1}$, $B_{k2}$, $B_{k3}$, $C_k$, $Q$, and symmetric matrices $P_L=P^{-1}$, $P_H$ such that physical realizability constraints (\ref{our phy thm}) and inequality constraints (\ref{nLMI LQG}) and (\ref{nLMI Hinf}) hold, where $P$ is the solution of equation (\ref{lya eq}), then the mixed LQG and $H_{\infty}$ coherent feedback control problem admits a coherent feedback controller $K$ of the form (\ref{controller}).
\end{mylemma}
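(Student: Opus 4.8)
The plan is to obtain the conclusion by stitching together the two relaxation results already in hand, namely the Relaxed LQG lemma and the Relaxed $H_{\infty}$ lemma, exploiting the fact that the LMI systems (\ref{nLMI LQG}) and (\ref{nLMI Hinf}) are written in terms of the \emph{same} closed-loop data $M,N,H,\Gamma,\Psi$, hence in terms of one and the same set of controller matrices $A_{k},B_{k1},B_{k2},B_{k3},C_{k}$. Consequently a joint feasible point of (\ref{our phy thm}), (\ref{nLMI LQG}) and (\ref{nLMI Hinf}) already specifies a single controller $K$ of the form (\ref{controller}); the work is then only to read off its three required properties from the three groups of constraints.

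First I would invoke the Relaxed $H_{\infty}$ lemma. Feasibility of (\ref{our phy thm}) together with (\ref{nLMI Hinf}) (with $P_{H}>0$) makes $K$ a physically realizable quantum system, and, since the $(1,1)$ block of the negative-definite matrix in (\ref{nLMI Hinf}) gives $M^{T}P_{H}+P_{H}M<0$ with $P_{H}>0$, it forces $M$ to be Hurwitz and, by the bounded-real lemma, $\lVert \Gamma(sI-M)^{-1}H\rVert_{\infty}<\gamma_{\infty}$. Because $M$ is now Hurwitz, the Lyapunov equation (\ref{lya eq}) has a unique $P=P^{T}>0$ and the LQG index $J_{\infty}=\mathrm{Tr}(\Psi P\Psi^{T})$ is well defined and finite. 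Next I would invoke the Relaxed LQG lemma with the matrices $A_{k},B_{k3},C_{k}$ (together with $B_{k1},B_{k2}$ fixed by (\ref{our phy thm})) from the same feasible point: a Schur complement of the first inequality of (\ref{nLMI LQG}), followed by pre- and post-multiplication by $P_{L}^{-1}$, shows that $P_{L}^{-1}$ over-bounds the controllability Gramian, i.e. $P_{L}^{-1}\ge P$; combining with the Schur complement of the second inequality, $Q>\Psi P_{L}^{-1}\Psi^{T}\ge \Psi P\Psi^{T}$, and the third inequality $\mathrm{Tr}(Q)<\gamma_{l}$, gives $J_{\infty}=\mathrm{Tr}(\Psi P\Psi^{T})<\gamma_{l}$. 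Thus the single controller $K$ simultaneously meets the physical-realizability constraint, the $H_{\infty}$ bound, and the LQG bound, which is precisely the assertion.

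There is no deep obstacle; the only point that needs a little care is the internal consistency of the two relaxations, in particular reconciling the strict Lyapunov-type inequality in the first line of (\ref{nLMI LQG}) with the equality (\ref{lya eq}) that defines $P$. The resolution is that $P_{L}$ is treated as an independent LMI variable rather than being literally equated with $P^{-1}$ at the feasible point; the working fact one extracts is only the domination $P_{L}^{-1}\ge P$, which is all the trace bound requires. Everything else is the routine bookkeeping of matching the shared controller variables across (\ref{nLMI LQG}) and (\ref{nLMI Hinf}) and quoting the two preceding lemmas, so no new argument about physical realizability or the QSDE structure is needed.
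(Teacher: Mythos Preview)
Your approach is correct and is precisely the natural one: the paper itself does not supply a proof of this lemma, presenting it instead as a cited result from \citet{BZL2015} obtained by combining the two preceding relaxation lemmas (which are themselves stated without proof). The careful point you isolate---that the hypothesis ``$P_L=P^{-1}$'' cannot be read literally at a strict-inequality feasible point, and that the operative fact is the domination $P_L^{-1}\ge P$---is a genuine observation about the lemma's phrasing; the paper glosses over this entirely.
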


\section{Algorithms for mixed LQG and $H_{\infty}$ coherent feedback control problem}\label{method}

In this section, the coherent feedback controllers for mixed LQG and $H_{\infty}$ problems are constructed by using two different methods, rank constrained LMI method and genetic algorithm (GA).

\subsection{Rank constrained LMI method}

In Lemma \ref{mixed cond} for the mixed problem, obviously constraints (\ref{nLMI LQG}) and (\ref{nLMI Hinf}) are non-linear matrix inequalities, and physical realizability conditions (\ref{our phy thm}) are non-convex constraints. Therefore, it is difficult to obtain the optimal solution by existing optimization algorithms. By referring to \citet{BZL2015,NJP2009,SGC1997}, we could translate these non-convex and non-linear constraints to linear ones.

Firstly, we redefine the original plant (\ref{plant}) to a \emph{modified plant} as follows
\begin{equation}\begin{split}
dx(t) &=Ax(t)dt+B_w d\tilde{w}_{cl}(t)+B_1 \beta_w (t)dt \\
&\phantom{=}+B_2 \beta_u (t)dt, \\
dy'(t) &=[b_{vk1}^T (t)\ b_{vk2}^T (t)\ y^T (t)]^T \\
&=Cx(t)dt+D_w d\tilde{w}_{cl}(t)+D\beta_w (t)dt, \\
dz_{\infty}(t) &=C_1 x(t)dt+D_{\infty}d\tilde{w}_{cl}(t)+D_{12}\beta_u (t)dt, \\
z_l (t) &=C_z (t)+D_z \beta_u (t),
\end{split}\end{equation}
where $B_w =[B_0\ B_1\ B_2\ 0]$, $C=[0\ 0\ C_2^T]^T$, $D=[0\ 0\ D_{12}^T]^T$, $D_{\infty}=[0\ 0\ D_{12}\ 0]$ and $D_w=\begin{bmatrix}
0 & 0 & I & 0 \\
0 & 0 & 0 & I \\
D_{20} & D_{21} & 0 & 0
\end{bmatrix}$.
Correspondingly, the \emph{modified controller} equations become the following form
\begin{equation}\begin{split}
d\xi (t) &=A_k \xi (t)dt+B_{wk}dy'(t), \\
\beta_u (t) &=C_k \xi (t)
\end{split}\end{equation}
with $B_{wk}=[B_{k1}\ B_{k2}\ B_{k3}]$, and the closed-loop system still has the same form as (\ref{cl-loop system}).

\begin{myassumption}
For simplicity we assume $P_H=P_L=P^{-1}$.
\end{myassumption}

We proceed to introduce matrix variables $\Xi$, $\Sigma$, $X$, $Y$, $Q \in \mathbb{R}^{n\times n}$, where $X$, $Y$ and $Q$ are symmetric. Then define the change of controller variables as follows
\begin{equation}\label{var change}\begin{split}
\hat{A} &\coloneqq \Xi A_k \Sigma^T+\Xi B_{wk}C X+Y B_2 C_k \Sigma^T+Y A X, \\
\hat{B} &\coloneqq \Xi B_{wk}, \\
\hat{C} &\coloneqq C_k \Sigma^T,
\end{split}\end{equation}
where $\Sigma \Xi^T=I-X Y$.

By using (\ref{var change}), LQG inequality constrains (\ref{nLMI LQG}) can be transformed to (\ref{LMI LQG}). Similarly, $H_{\infty}$ inequality constraints (\ref{nLMI Hinf}) become (\ref{LMI Hinf}). It is obvious that the following matrix inequalities are linear, so they can be easily solved by Matlab.
\begin{figure*}[!htb]
\normalsize
\setcounter{equation}{31}
{\small
\begin{equation}\label{LMI LQG}\begin{split}
\begin{bmatrix}
AX+XA^{T}+B_{2}\hat{C}+(B_{2}\hat{C})^{T} & \hat{A}^{T}+A & B_{w} \\
\hat{A}+A^{T} & A^{T}Y+YA+\hat{B}C+(\hat{B}C)^{T} & YB_{w}+\hat{B}D_{w} \\
B_{w}^{T} & (YB_{w}+\hat{B}D_{w})^{T} & -I
\end{bmatrix} &<0, \\
\begin{bmatrix}
X & I & (C_{z}X+D_{z}\hat{C})^{T} \\
I & Y & C_{z}^{T} \\
(C_{z}X+D_{z}\hat{C}) & C_{z} & Q
\end{bmatrix} &>0, \\
\mathrm{Tr}(Q) &<\gamma_l.
\end{split}
\end{equation}
\hrulefill
\begin{equation}
\begin{bmatrix}
AX+XA^{T}+B_{2}\hat{C}+(B_{2}\hat{C})^{T} & \hat{A}^{T}+A & \ast & \ast \\
\hat{A}+A^{T} & A^{T}Y+YA+\hat{B}C+(\hat{B}C)^{T} & \ast & \ast \\
B_{1}^{T} & (YB_{1}+\hat{B}D)^{T} & -\gamma_{\infty}I & \ast \\
C_{1}X+D_{12}\hat{C} & C_{1} & 0 & -\gamma_{\infty}I
\end{bmatrix}<0.  \label{LMI Hinf}
\end{equation}}
\setcounter{equation}{33}
\vspace*{4pt}
\end{figure*}

From (\ref{var change}), we can obtain $C_k=\hat{C} \Sigma^{-T}$, $B_{wk}=\Xi^{-1} \hat{B}$, and $A_k=\Xi^{-1} (\hat{A}-\Xi B_{wk} C X-Y B_2 C_k \Sigma^T-Y A X)\Sigma^{-T}$. After substituting $A_k$, $B_{wk}$ and $C_k$ into (\ref{our phy thm}) and introducing new variables $\tilde{\Xi}=\Xi J_{N_{\zeta}}$, $\tilde{A}_k=\Xi A_k$, $\tilde{B}_{ki}=\Xi B_{ki}$, $i=1,2,3$, physical realizability constraints (\ref{our phy thm}) become
\begin{subequations}\label{linear phy thm}\begin{align}
&(-\hat{A} \Sigma^{-T}+(\tilde{B}_{k3} C_2+Y A)X \Sigma^{-T}+Y B_2 C_k)\tilde{\Xi}^T \nonumber \\
&+\tilde{\Xi} (\hat{A} \Sigma^{-T}-(\tilde{B}_{k3} C_2+Y A)X \Sigma^{-T}-Y B_2 C_k)^T \nonumber \\
&+\sum_{i=1}^3 \tilde{B}_{ki} J_{n_{vki}/2} \tilde{B}_{ki}^T=0, \\
&\tilde{B}_{k1}-\tilde{\Xi} C_k^T J_{n_{vk1}/2}=0.
\end{align}\end{subequations}

We get the following result for the mixed LQG and $H_{\infty}$ coherent feedback control problem.

\begin{mylemma}
Given $\Theta_k$, $\gamma_l>0$ and $\gamma_{\infty}>0$, if there exist matrices $\hat{A}$, $\tilde{B}_{k1}$, $\tilde{B}_{k2}$, $\tilde{B}_{k3}$, $\hat{C}$, $X$, $Y$, $\tilde{\Xi}$, $\Sigma$, $\Xi$, $C_k$ such that the LMIs (\ref{LMI LQG}), (\ref{LMI Hinf}) and equality constraints (\ref{linear phy thm}) hold, then the mixed LQG and $H_{\infty}$ coherent feedback control problem admits a coherent feedback controller $K$ of the form (\ref{controller}).
\end{mylemma}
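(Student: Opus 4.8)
The statement is a linearisation (reformulation) lemma: it asserts that feasibility of the LMIs (\ref{LMI LQG}), (\ref{LMI Hinf}) together with the equality constraints (\ref{linear phy thm}) is sufficient for the hypotheses of Lemma~\ref{mixed cond}, and hence for the existence of a coherent controller $K$ of the form (\ref{controller}). The plan is to take a feasible point of the linearised problem and reconstruct from it a closed-loop Lyapunov matrix $P_L=P_H>0$ together with controller matrices $A_k,B_{k1},B_{k2},B_{k3},C_k$, then check that these satisfy the physical realizability conditions (\ref{our phy thm}) and the matrix inequalities (\ref{nLMI LQG}), (\ref{nLMI Hinf}); the conclusion then follows immediately from Lemma~\ref{mixed cond}. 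The whole argument is the standard congruence / linearising change-of-variables argument for dynamic output feedback (cf. \citet{SGC1997}), here specialised to the composite plant--controller system (\ref{cl-loop system}) and supplemented with the realizability bookkeeping.

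First I would reconstruct $P$. The leading $2\times 2$ block of the second matrix inequality in (\ref{LMI LQG}), namely $\left[\begin{smallmatrix}X & I\\ I & Y\end{smallmatrix}\right]>0$, forces $X,Y>0$ and $\det(XY-I)=\det\!\left[\begin{smallmatrix}X & I\\ I & Y\end{smallmatrix}\right]>0$, so $I-XY$ is invertible; choose $\Sigma,\Xi$ invertible with $\Sigma\Xi^{T}=I-XY$ (these are among the given data). Put $\Pi_1=\left[\begin{smallmatrix}X & I\\ \Sigma^{T} & 0\end{smallmatrix}\right]$, which is invertible, and $\Pi_2=\left[\begin{smallmatrix}I & Y\\ 0 & \Xi^{T}\end{smallmatrix}\right]$, and let $P$ be the symmetric matrix defined by $P\Pi_1=\Pi_2$; then $\Pi_1^{T}P\Pi_1=\Pi_1^{T}\Pi_2=\left[\begin{smallmatrix}X & I\\ I & Y\end{smallmatrix}\right]>0$, so $P>0$, and $P$ has block form $\left[\begin{smallmatrix}Y & \Xi\\ \Xi^{T} & \ast\end{smallmatrix}\right]$; set $P_L=P_H=P$. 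Invert (\ref{var change}) to recover $C_k=\hat{C}\Sigma^{-T}$, $[B_{k1}\ B_{k2}\ B_{k3}]=\Xi^{-1}\hat{B}$ and $A_k=\Xi^{-1}\bigl(\hat{A}-\Xi B_{wk}CX-YB_2C_k\Sigma^{T}-YAX\bigr)\Sigma^{-T}$, and form the closed-loop matrices $M,N,H,\Gamma,\Psi$ of (\ref{cl-loop system}) from them. The crux is the block identities, obtained by direct multiplication using only $\Sigma\Xi^{T}=I-XY$ and (\ref{var change}),
\begin{equation*}
\Pi_2^{T}M\Pi_1=\begin{bmatrix}AX+B_2\hat{C} & A\\ \hat{A} & YA+\hat{B}C\end{bmatrix},\quad
\Pi_2^{T}N=\begin{bmatrix}B_w\\ YB_w+\hat{B}D_w\end{bmatrix},\quad
\Pi_2^{T}H=\begin{bmatrix}B_1\\ YB_1+\hat{B}D\end{bmatrix},
\end{equation*}
\begin{equation*}
\Gamma\Pi_1=\begin{bmatrix}C_1X+D_{12}\hat{C} & C_1\end{bmatrix},\quad
\Psi\Pi_1=\begin{bmatrix}C_zX+D_z\hat{C} & C_z\end{bmatrix},\quad
\Pi_1^{T}P\Pi_1=\begin{bmatrix}X & I\\ I & Y\end{bmatrix},
\end{equation*}
combined with $\Pi_1^{T}(M^{T}P+PM)\Pi_1=\Pi_2^{T}M\Pi_1+(\Pi_2^{T}M\Pi_1)^{T}$ and $\Pi_1^{T}PN=\Pi_2^{T}N$, which hold because $P=P^{T}$ and $P\Pi_1=\Pi_2$. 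Applying the congruence $\mathrm{diag}(\Pi_1^{T},I)$ to the two matrix inequalities in (\ref{nLMI LQG}) and $\mathrm{diag}(\Pi_1^{T},I,I)$ to (\ref{nLMI Hinf}) and substituting these identities transforms them block by block into (\ref{LMI LQG}) and (\ref{LMI Hinf}); since $\Pi_1$ is invertible the step is lossless, so feasibility of (\ref{LMI LQG}), (\ref{LMI Hinf}) yields (\ref{nLMI LQG}), (\ref{nLMI Hinf}) with $P_L=P_H=P$ (and $Q$, $\mathrm{Tr}(Q)<\gamma_l$ unchanged).

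It remains to treat the physical realizability part. Substituting the recovered $A_k,B_{k1},B_{k2},B_{k3},C_k$ into (\ref{our phy thm}), multiplying the first equation on the left by $\Xi$ and on the right by $\Xi^{T}$, and using the auxiliary variables $\tilde{B}_{ki}=\Xi B_{ki}$ and $\tilde{\Xi}$ introduced just before (\ref{linear phy thm}) together with the expression $A_k=\Xi^{-1}(\cdots)\Sigma^{-T}$ above, reproduces the first equation of (\ref{linear phy thm}); likewise $B_{k1}=\Theta_k C_k^{T}\mathrm{diag}_{n_u/2}(F)$ becomes $\tilde{B}_{k1}=\tilde{\Xi}C_k^{T}J_{n_{vk1}/2}$, i.e. the second equation of (\ref{linear phy thm}), after the quadrature permutation identifying $\mathrm{diag}(F,\dots,F)$ with $J$. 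Hence the reconstructed data, for the given $\Theta_k$, $\gamma_l$, $\gamma_\infty$, satisfy all hypotheses of Lemma~\ref{mixed cond}, which then yields a coherent feedback controller $K$ of the form (\ref{controller}), completing the proof. I expect the only real difficulty to be the bookkeeping: verifying the displayed block identities one block at a time, checking that $\Sigma$ (hence $\Pi_1$) is nonsingular and $P$ positive definite, confirming that the modified-plant rewriting of Section~\ref{method} reproduces exactly the closed-loop system (\ref{cl-loop system}), and reconciling the $\tfrac12 NN^{T}$ normalisation in (\ref{lya eq}) with the Schur-complement form used in (\ref{nLMI LQG}) — none of which is deep, but all of which must be done with care.
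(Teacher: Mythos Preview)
Your proposal is correct and follows essentially the same route as the paper. The paper does not give a separate proof of this lemma; it is presented as a summary of the derivation immediately preceding it, namely the linearising change of variables (\ref{var change}) from \citet{SGC1997}, the congruence transformation turning (\ref{nLMI LQG}), (\ref{nLMI Hinf}) into (\ref{LMI LQG}), (\ref{LMI Hinf}), and the substitution yielding (\ref{linear phy thm}) from (\ref{our phy thm}) --- exactly the steps you spell out.
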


\begin{myalgorithm}(Rank constrained LMI method \citep{BZL2015})

Firstly, introduce 13 basic matrix variables: $M_1=\hat{A}$, $M_2=\tilde{B}_{k1}$, $M_3=\tilde{B}_{k2}$, $M_4=\tilde{B}_{k3}$, $M_5=\hat{C}$, $M_6=X$, $M_7=Y$, $M_8=\tilde{\Xi}$, $M_9=\Sigma$, $M_{10}=\Xi$, $M_{11}=C_k$, $M_{12}=\check{A}=\hat{A}\Sigma^{-T}$, $M_{13}=\check{X}=X \Sigma^{-T}$. And define 18 matrix lifting variables: $W_i=\tilde{B}_{ki} J_{N_{vki}}\ (i=1,2,3)$, $W_4=Y B_2$, $W_5=\tilde{B}_{k3} C_2 +Y A$, $W_6=\tilde{\Xi} C_k^T$, $W_7=\tilde{\Xi} \check{X}^T$, $W_8=\check{A} \tilde{\Xi}^T$, $W_9=Y X$, $W_{10}=W_4 W_6^T$, $W_{11}=W_5 W_7^T$, $W_{12}=W_1 \tilde{B}_{k1}^T$, $W_{13}=W_2 \tilde{B}_{k2}^T$, $W_{14}=W_3 \tilde{B}_{k3}^T$, $W_{15}=\Xi \Sigma^T=I-Y X$, $W_{16}=\check{A} \Sigma^T=\hat{A}$, $W_{17}=\check{X} \Sigma^T=X$, $W_{18}=C_k \Sigma^T=\hat{C}$.

By defining
\begin{equation}\begin{split}
V &=[I\ Z_{m_1,1}^T\ \cdots\ Z_{m_{13},1}^T\ Z_{w_1,1}^T\ \cdots\ Z_{w_{18},1}^T]^T \\
&=[I\ M_1^T\ \cdots\ M_{13}^T\ W_1^T\ \cdots\ W_{18}^T]^T,
\end{split}\end{equation}
we could let $Z$ be a $32n\times 32n$ symmetric matrix with $Z=V V^T$. It is obvious that $Z_{m_i,w_i}=Z_{m_i,1} (Z_{w_i,1})^T$.

Meanwhile, because of relations between these 31 variables, we require the matrix $Z$ to satisfy the following constraints
\begin{equation}\begin{split}
Z &\geq 0,\text{ \ \ \ } \\
Z_{0,0}-I_{n\times n} &=0,\text{ \ \ \ }%
Z_{w_{7},1}-Z_{m_{8},m_{13}}=0, \\
Z_{1,x_{6}}-Z_{m_{6},1} &=0,\text{ \ \ \ }%
Z_{w_{8},1}-Z_{m_{12},m_{8}}=0 \\
Z_{1,x_{7}}-Z_{m_{7},1} &=0,\text{ \ \ \ }%
Z_{w_{9},1}-Z_{m_{7},m_{6}}=0, \\
Z_{w_{1},1}-Z_{m_{2},1}J_{n_{vk1}/2} &=0,\text{ \ \ \ }%
Z_{w_{10},1}-Z_{w_{4},w_{6}}=0, \\
Z_{w_{2},1}-Z_{m_{3},1}J_{n_{vk2}/2} &=0,\text{ \ \ \ }%
Z_{w_{11},1}-Z_{w_{5},w_{7}}=0, \\
Z_{w_{3},1}-Z_{m_{4},1}J_{n_{vk3}/2} &=0,\text{ \ \ \ }%
Z_{w_{12},1}-Z_{w_{1},m_{2}}=0, \\
Z_{w_{4},1}-Z_{m_{7},1}B_{2} &=0,\text{ \ \ \ }%
Z_{w_{13},1}-Z_{w_{2},m_{3}}=0, \\
Z_{w_{5},1}-Z_{m_{4},1}C_{2}-Z_{m_{7},1}A &=0,\text{ \ \ \ }%
Z_{w_{14},1}-Z_{w_{3},m_{4}}=0, \\
Z_{w_{6},1}-Z_{m_{8},m_{11}} &=0,\text{ \ \ \ }%
Z_{w_{15},1}-Z_{m_{10},m_{9}}=0, \\
\text{ \ \ \ \ }Z_{w_{16},1}-Z_{m_{12},m_{9}} &=0,\text{ \ \ \ }Z_{w_{17},1}-Z_{m_{13},m_{9}}=0, \\
Z_{w_{18},1}-Z_{m_{11},m_{9}} &=0,\text{ \ \ \ }%
Z_{w_{15},1}-I+Z_{w_{9},1}=0, \\
\text{\ }Z_{m_{1},1}-Z_{w_{16},1} &=0,\text{ \ \ \ }%
Z_{m_{6},1}-Z_{w_{17},1}=0, \\
Z_{m_{8},1}-Z_{m_{10},1}J_{n_{\xi}/2} &=0,\text{ \ \ \ }%
Z_{m_{5},1}-Z_{w_{18},1}=0,
\end{split}\end{equation}
and moreover, $Z$ satisfies a rank $n$ constraint, i.e. $rank(Z) \leq n$.

Then, we use $Z_{m_1,1}$, $[Z_{m_2,1}\ Z_{m_3,1}\ Z_{m_4,1}]$, $Z_{m_5,1}$, $Z_{m_6,1}$, $Z_{m_7,1}$ to replace $\hat{A}$, $\hat{B}$, $\hat{C}$, $X$, $Y$ in LMI constraints (\ref{LMI LQG}) and (\ref{LMI Hinf}), and convert physical realizability conditions (\ref{linear phy thm}) to
\begin{subequations}\begin{align}
&-Z_{w_8,1}+Z_{w_8,1}^T+Z_{w_{11},1}-Z_{w_{11},1}^T+Z_{w_{10},1}-Z_{w_{10},1}^T \nonumber \\
&+Z_{w_{12},1}+Z_{w_{13},1}+Z_{w_{14},1}=0, \\
&Z_{m_2,1}-Z_{w_6,1} J_{N_{vk1}}=0.
\end{align}\end{subequations}

We have transformed the mixed problem to a rank constrained problem, which could be solved by using Toolbox: Yalmip \citep{Lof2004}, SeDuMi and LMIRank \citep{OHM2006}.
\end{myalgorithm}

\begin{myremark}
The above LMI-based approach solves a sub-optimal control problem for the mixed $LQG/H_\infty$ coherent feedback control. Once a feasible solution is found by implementing Algorithm 1, we then know that the LQG index is bounded by $\gamma_l$ from above, and {\it simultaneously}, the $H_\infty$ index is bounded by $\gamma_\infty$ from above.
\end{myremark}

\subsection{Genetic algorithm}

Genetic algorithm is a search heuristic that mimics the process of natural selection in the field of artificial intelligence. This heuristic (sometimes called metaheuristic) is routinely used to generate useful solutions to optimization and search problems. Genetic algorithm belongs to the larger class of evolutionary algorithms, which get solutions using techniques inspired by natural evolution, such as inheritance, mutation, selection, and crossover, etc. Genetic algorithm is a useful method for controller design, see e.g., \citet{CZ2003,NA2004,PA2004}. In the field of quantum control, genetic algorithm methods are applied to design quantum coherent feedback controllers, see e.g.,  \citet{ZLH2012} and \citet{HP2015}.

We briefly introduce the procedures of genetic algorithm as follows.

\begin{myalgorithm}(Genetic algorithm)

\begin{description}
\item[Step 1]: Initialization for the population (the first generation), by using random functions, and binary strings denote controller parameters we want to design;
\item[Step 2]: Transform binary strings to decimal numbers, and calculate the results of these parameters;
\item[Step 3]: After obtaining coefficient matrices of the controller, we restrict one of the LQG or $H_{\infty}$ indices in an interval, then minimize the other index (the fitness function in our problem). Since the lower bounds of these two indices can be calculated a priori, see Sections \ref{LQG analysis} and \ref{Hinf analysis}, the above-mentioned interval can always be found. By the above procedure we get the best individual and corresponding performance index in this generation;
\item[Step 4]: Perform the selection operation, for yielding new individuals;
\item[Step 5]: Perform the crossover operation, for yielding new individuals;
\item[Step 6]: Perform the mutation operation, for yielding new individuals;
\item[Step 7]: Back to Step 2, recalculate all parameters and corresponding best fitness function result for new generation;
\item[Step 8]: At the end of iterations, compare all best results of every generation, and get the optimal solution.
\end{description}
\end{myalgorithm}

\begin{myremark}
Algorithm 2 does not minimize both LQG and $H_\infty$ performance indices simultaneously. More specifically, as can be seen in Step 3, one of the indices is first fixed, then the other one is minimized. This procedure is repeated as can be seen from Step 7. Therefore, Algorithm 2 is an iterative minimization algorithm.
\end{myremark}

In our problem, because the coherent feedback controller $K$ to be designed is a quantum system, it can be described by the $(S,L,H)$ language introduced in Subsection \ref{subsec:system}. With this, physical realizability conditions are naturally satisfied. As a result we apply the GA to find $K$ by minimizing the LQG and $H_\infty$ performance indices directly.

\section{Numerical simulations and comparisons}\label{examples}

In this section, we provide two examples to illustrate the methods proposed in the previous section.

\subsection{Numerical simulations}

\emph{Example 1:} This example is taken from Section \Rmnum{7} of \citet{JNP2008}. The plant is an optical cavity resonantly coupled to three optical channels.

The dynamics of this optical cavity system can be described by following equations
\begin{equation}\label{example1_plant}\begin{split}
dx\left( t\right)  &=-\frac{\gamma }{2}
\begin{bmatrix}
1 & 0 \\
0 & 1\\
\end{bmatrix}
x\left( t\right) dt-\sqrt{\kappa _{1}}%
\begin{bmatrix}
1 & 0 \\
0 & 1%
\end{bmatrix}dv(t)\\
&\phantom{22}-\sqrt{\kappa _{2}}
\begin{bmatrix}
1 & 0 \\
0 & 1%
\end{bmatrix}
dw\left( t\right) -\sqrt{\kappa _{3}}%
\begin{bmatrix}
1 & 0 \\
0 & 1\\
\end{bmatrix}
du(t),  \\
dy(t)&=\sqrt{\kappa _{2}}
\begin{bmatrix}
1 & 0 \\
0 & 1 \\
\end{bmatrix}x(t)dt+
\begin{bmatrix}
1 & 0 \\
0 & 1\\
\end{bmatrix}dw(t), \\
dz_{\infty}(t)&=\sqrt{\kappa _{3}}
\begin{bmatrix}
1 & 0 \\
0 & 1\\
\end{bmatrix}x(t)dt+
\begin{bmatrix}
1 & 0 \\
0 & 1\\
\end{bmatrix}du(t),    \\
z_{l}(t)&=
\begin{bmatrix}
1 & 0 \\
0 & 1\\
\end{bmatrix}x(t)+
\begin{bmatrix}
1 & 0 \\
0 & 1\\
\end{bmatrix}\beta _{u}(t)
\end{split}\end{equation}
with parameters $\gamma =\kappa _{1}+\kappa _{2}+\kappa _{3}$, $\kappa _{1}=2.6$, $\kappa _{2}=\kappa _{3}=0.2$. In this example, $v(t)$ is quantum white noise, while $w(t)$ is a sum of quantum white noise and $L_2$ disturbance (See Assumption 1 for details). Therefore, there are two types of noises in this system. LQG control is used to suppress the influence of quantum white noise, while $H_\infty$ control is used to attenuate the $L_2$ disturbance.

\emph{Example 2:} In this example, we choose a DPA as our plant. For more details about DPA, one may refer to \citet{Leo2003}. The QSDEs of DPA are
\begin{equation}\label{example2_plant}
\begin{split}
dx\left( t\right)  &=-\frac{1}{2}
\begin{bmatrix}
\gamma-\epsilon & 0 \\
0 & \gamma+\epsilon\\
\end{bmatrix}
x\left( t\right) dt-\sqrt{\kappa _{3}}%
\begin{bmatrix}
1 & 0 \\
0 & 1%
\end{bmatrix}dv(t)\\
&\phantom{22}-\sqrt{\kappa _{1}}
\begin{bmatrix}
1 & 0 \\
0 & 1%
\end{bmatrix}
dw\left( t\right)-\sqrt{\kappa _{2}}%
\begin{bmatrix}
1 & 0 \\
0 & 1\\
\end{bmatrix}
du(t), \\
dy(t)&=\sqrt{\kappa _{3}}
\begin{bmatrix}
1 & 0 \\
0 & 1 \\
\end{bmatrix}x(t)dt+
\begin{bmatrix}
1 & 0 \\
0 & 1\\
\end{bmatrix}dv(t), \\
dz_{\infty}(t)&=\sqrt{\kappa _{2}}
\begin{bmatrix}
1 & 0 \\
0 & 1
\end{bmatrix}x(t)dt+
\begin{bmatrix}
1 & 0 \\
0 & 1\\
\end{bmatrix}du(t),    \\
z_{l}(t)&=
\begin{bmatrix}
1 & 0 \\
0 & 1\\
\end{bmatrix}x(t)+
\begin{bmatrix}
1 & 0 \\
0 & 1\\
\end{bmatrix}\beta _{u}(t)
\end{split}\end{equation}
with parameters $\gamma =\kappa _{1}+\kappa _{2}+\kappa _{3}$, $\kappa_{1}=\kappa_{2}=0.2$, $\kappa _{3}=0.5$, $\epsilon=0.01$.

According to Theorem \ref{lower bound LQG}, it is easy to find that lower bounds of the LQG index for both two examples are 1. Firstly, we only focus on the LQG performance index, and design two different types of controllers to minimize it by using genetic algorithm. The results are shown in Table~\ref{res one index LQG}. For each case, we list two values obtained.

\begin{table}[!htbp]
\tbl{Optimization results only for LQG index.}
{\small
\begin{tabular}{c|c|c}
\hline
plant & controller & $J_{\infty}$ (LQG index) \\
\hline
\multirow{4}{*}{Cavity} & \multirow{2}{*}{Passive Controller} & 1.0005 \\
\cline{3-3}
& & 1.0000 \\
\cline{2-3}
& \multirow{2}{*}{Non-passive Controller} & 1.0006 \\
\cline{3-3}
& & 1.0003 \\
\hline
\multirow{4}{*}{DPA} & \multirow{2}{*}{Passive Controller} & 1.0003 \\
\cline{3-3}
& & 1.0000 \\
\cline{2-3}
& \multirow{2}{*}{Non-passive Controller} & 1.0002 \\
\cline{3-3}
& & 1.0000 \\
\hline
\end{tabular}}
\label{res one index LQG}
\end{table}

\begin{myremark}
$J_\infty$ in Table 1 is the LQG performance index defined in Eq. (\ref{eq:lqg}).  In Theorem \ref{lower bound LQG}, a lower bound for $J_\infty$ is proposed. This lower bound is obtained when both the plant and the controller are in the ground state, as stated in Theorem \ref{lower bound LQG}. In Table 1 above there are two systems, namely the optical cavity and DPA. For both of them, the lower bound in Theorem \ref{lower bound LQG} satisfies  $d_2 =0$ and $c_1=c_3=1$. Therefore, $J_\infty\geq 1$. From Table 1 we can see that our genetic algorithm finds controllers that yield the LQG performance which is almost optimal. And in this case, as guaranteed by Theorem \ref{lower bound LQG}, both the  plant and the controller are almost in the ground state.
\end{myremark}

Secondly, similarly to the LQG case, we only focus on the $H_{\infty}$ index and design controllers to minimize the objective, getting the following Table~\ref{res one index Hinf}. For each case, we list two values obtained.

\begin{table}[!htbp]
\tbl{Optimization results only for $H_{\infty}$ index.}
{\small
\begin{tabular}{c|c|c}
\hline
plant & controller & $\Vert G_{\beta _{w}\rightarrow z_{\infty}}\Vert _{\infty}$ ($H_{\infty}$ index) \\
\hline
\multirow{4}{*}{Cavity} & \multirow{2}{*}{Passive Controller} & 0.0134 \\
\cline{3-3}
& & 0.0050 \\
\cline{2-3}
& \multirow{2}{*}{Non-passive Controller} & 0.0196 \\
\cline{3-3}
& & 0.0075 \\
\hline
\multirow{4}{*}{DPA} & \multirow{2}{*}{Passive Controller} & 0.0070 \\
\cline{3-3}
& & 0.0044 \\
\cline{2-3}
& \multirow{2}{*}{Non-passive Controller} & 0.0057 \\
\cline{3-3}
& & 0.0045 \\
\hline
\end{tabular}}
\label{res one index Hinf}
\end{table}

\begin{myremark}
Table 2 is for $H_\infty$ performance index. For the cavity case, actually it can be proved analytically that the  $H_\infty$ performance index can be made arbitrarily close to zero. On the other hand, by Remark \ref{lower bound Hinf}, $H_\infty$ index has a lower bound $\sqrt{0.4}\delta$. However, for the DPA studied in this example, $\delta=0$, that is, the lower bound for $H_\infty$ index is also zero. The simulation results in Table 2 confirmed this observation.
\end{myremark}

From above results we could see, if we only consider one performance index, either LQG index or $H_{\infty}$ index,  there are no significant differences between passive controllers and non-passive controllers, both of which can lead to a performance index close to the minimum.

Then we proceed to use these two methods to do simulations for the mixed problem, to see whether we could succeed to make two performance indices close to the minima simultaneously, and find which method is better. The results are shown in Table~\ref{results LMI} and Table~\ref{results GA}, respectively.

\begin{table}[htbp]
\tbl{Optimization results by rank constrained LMI method.}
{\small
\begin{tabular}{c|c|c|c|c}
\hline
\multirow{2}{*}{plant}  & \multicolumn{2}{|c|}{constraints} & \multicolumn{2}{c}{results}\\
\cline{2-5}
& $\gamma_\infty$ & $\gamma_l$ &$\Vert G_{\beta _{w}\rightarrow z_{\infty}}\Vert _{\infty}$ ($H_{\infty}$ index) & $J_{\infty}$ (LQG index) \\
\hline
\multirow{5}{*}{\tabincell{c}{Cavity\\ ($\gamma =\kappa _{1}+\kappa _{2}+\kappa _{3}$,\\ $\kappa _{1}=2.6$,\\$\kappa _{2}=\kappa _{3}=0.2$.)}} & 0.1 & 2.5 & 0.039900 & 1.014555 \\
\cline{2-5}
& 0.1 & N/A & 0.058805 & 1.039487 \\
\cline{2-5}
& N/A & 2.5 & 0.134558 & 1.000577 \\
\cline{2-5}
& N/A & 3 & 0.423970 & 1.379587 \\
\cline{2-5}
& 2.8 & 3 & 0.444119 & 1.270835 \\
\hline
\multirow{3}{*}{\tabincell{c}{DPA\\($\gamma=\kappa_1+\kappa_2+\kappa_3$,\\ $\kappa_1=\kappa_2=0.2$,\\$\kappa_3=0.5, \epsilon=0.01$.)}} & 0.3 & 2.5 & 0.172385 & 1.175277 \\
\cline{2-5}
& 0.5 & 3 & 0.447274 & 1.080976 \\
\cline{2-5}
& N/A & 3 & 0.468007 & 1.149859 \\
\cline{2-5}
& 1 & 5 & 0.647468 & 1.374547 \\
\hline
\end{tabular}}
\label{results LMI}
\end{table}

\begin{table}[htbp]
\tbl{Optimization results by genetic algorithm.}
{\small
\begin{tabular}{c|c|c|c}
\hline
\multirow{2}{*}{plant}  & \multirow{2}{*}{controller} & \multicolumn{2}{c}{results}\\
\cline{3-4}
& & {$\Vert G_{\beta _{w}\rightarrow z_{\infty}}\Vert _{\infty}$ ($H_{\infty}$ index)} & {$J_{\infty}$ (LQG index)} \\
\hline
\multirow{6}{*}{\tabincell{c}{Cavity\\ ($\gamma =\kappa _{1}+\kappa _{2}+\kappa _{3}$,\\ $\kappa _{1}=2.6$,\\$\kappa _{2}=\kappa _{3}=0.2$.)}} & \multirow{3}{*}{\tabincell{c}{Passive Controller}} & 0.003574 & 1.008917 \\
\cline{3-4}
& & 0.078977 & 1.000619 \\
\cline{3-4}
& & 0.146066 & 1.000009 \\
\cline{2-4}
& \multirow{3}{*}{\tabincell{c}{Non-passive Controller}} & 0.071270 & 1.009303 \\
\cline{3-4}
& & 0.089383 & 1.002099 \\
\cline{3-4}
& & 0.123066 & 1.000283 \\
\hline
\multirow{6}{*}{\tabincell{c}{DPA\\($\gamma=\kappa_1+\kappa_2+\kappa_3$,\\ $\kappa_1=\kappa_2=0.2$,\\$\kappa_3=0.5, \epsilon=0.01$.)}} & \multirow{2}{*}{\tabincell{c}{Passive Controller}} & 0.428312 & 1.004787 \\
\cline{3-4}
& & 0.449534 & 1.000124 \\
\cline{2-4}
& \multirow{2}{*}{\tabincell{c}{Non-passive Controller}} & 0.364979 & 1.009691 \\
\cline{3-4}
& & 0.387734 & 1.007164 \\
\cline{2-4}
& \multirow{2}{*}{\tabincell{c}{Passive Controller \\ $+$ Direct coupling}} & 0.039183 & 1.000079 \\
\cline{3-4}
& & 0.042960 & 1.000002 \\
\hline
\end{tabular}}
\label{results GA}
\end{table}

\subsection{Comparisons of results}

After getting numerical results shown in Table~\ref{results LMI} and Table~\ref{results GA}, and doing comparisons with other literatures in coherent optimal control for linear quantum systems, we state the advantages of Algorithm 2:
\begin{enumerate}
\item Instead of single LQG or $H_{\infty}$ optimal control for linear quantum systems, Algorithm 2 deals with the {\it mixed} LQG and $H_{\infty}$ problem.

\item Algorithm 1 relaxes two performance indices by introducing $\gamma_l$ and $\gamma_\infty$. When they are small, it will be quite difficult to solve the problem by Algorithm 1.  But  Algorithm 2 is able to minimize the  two performance indices directly.

\item The solution of the differential evolution algorithm in \citet{HP2015} involves a complex algebraic Riccati equation, but all parameters of our Algorithm 2 are real. It might be easier to be solved by current computer software such as Matlab.

\item The numerical results show that there seems to be a trend between these two indices, that sometimes one increases, while another decreases.

\item For a passive system (e.g. cavity), both the passive controller and the non-passive controller could let LQG and $H_{\infty}$ indices go to the minima simultaneously, Table~\ref{results GA}.

\item For a non-passive system (e.g. DPA), neither the passive controller nor the non-passive controller can let these two indices go to the minima simultaneously, but when a direct coupling is added between the plant and the controller, we could use genetic algorithm to design a passive controller to minimize these two indices simultaneously, which is not achieved using rank constrained LMI method, Table~\ref{results GA}.

\item Actually, rank constrained LMI method could not be used to design specific passive controllers, or non-passive controllers, while this can be easily achieved using genetic algorithm, by setting all ``plus'' terms equal to 0.

\item Finally, from numerical simulations, genetic algorithm often provides better results than the rank constrained LMI method.
\end{enumerate}

\section{Conclusion}\label{conclusion}

In this paper, we have studied the mixed LQG and $H_{\infty}$ coherent feedback control problem.  Two algorithms,  rank constrained LMI method and a genetic algorithm-based method, have been proposed.  Two examples are used to illustrate the effectiveness of these two methods, and also verify the superiority of genetic algorithm by numerical results.



\section*{Funding}
This research is supported in part by National Natural Science Foundation of China (NSFC) grants (Nos. 61374057) and Hong Kong RGC grants (Nos. 531213 and 15206915).


\end{document}